\documentclass[lettersize,journal]{IEEEtran}
\usepackage{amsmath,amsfonts}
\usepackage{algorithmic}
\usepackage{algorithm}
\usepackage{array}
\usepackage[caption=false,font=normalsize,labelfont=sf,textfont=sf]{subfig}
\usepackage{textcomp}
\usepackage{stfloats}
\usepackage{url}
\usepackage{verbatim}
\usepackage{graphicx}
\usepackage{cite}
\usepackage{tabularx}
\usepackage{amsthm}
\newtheorem{lemma}{Lemma}

\usepackage{xcolor}

\begin{document}

\title{Degree Centrality Algorithms for Weighted Multilayer Networks (or $\omega$-MLNs)}

\author{Ayomide Ayowole-Obi, Abhishek Santra, Sharma Chakravarthy}

\markboth{Journal of \LaTeX\ Class Files,~Vol.~14, No.~8, August~2021}%
{Shell \MakeLowercase{\textit{et al.}}: A Sample Article Using IEEEtran.cls for IEEE Journals}


\maketitle

\begin{abstract}
Centrality measures are defined for simple graphs -- directed, undirected, weighted or unweighted.  Attributed graphs have 
to be reduced to simple graphs for computing centrality measures. However, when applications with multiple types of relationships are modeled using multilayer networks (MLNs), simple graph algorithms cannot be \textit{directly} used. Existing approaches typically analyze MLNs by aggregating layers of an MLN into a single graph, which results in the loss of structural and semantic information. The semantic information loss can be more pronounced particularly, in weighted networks. 

This work focuses on computing \textit{degree centrality} in weighted homogeneous multilayer networks (HoMLNs) using a decoupling-based framework. The framework performs independent layer-wise analysis on MLNs without reducing them to simple graphs. The decoupling approach allows  use of exiting algorithms for each layer and uses minimal information from individual layers for computing degree centrality of HoMLNs.  We propose heuristic-based algorithms that strike a balance between accuracy and efficiency.

The proposed methods are evaluated against ground truth (GT) results obtained using Boolean OR aggregation and naive baselines. Experimental results on both synthetic and real-world HoMLN datasets demonstrate that the heuristics achieve accuracy comparable to the ground truth while significantly improving computational efficiency, thereby establishing the scalability and effectiveness of the HoMLN algorithms developed using the decoupling approach.
\end{abstract}

\begin{IEEEkeywords}

Weighted Multilayer networks, Homogeneous MLN, Weighted Degree Centrality, Decoupling-based approach, Heuristics-based algorithms 
\end{IEEEkeywords}

\section{INTRODUCTION}
\label{section:introduction}

Data with multiple relationships between entities can be represented using different models, one of which is a graph. Graphs consist of nodes (vertices) representing entities and edges (links) to denote their relationship. Edges may carry weights that quantify the value or cost of a connection. Graphs are widely used because they support relationships and are easy to visualize. Also, many algorithms are available for their analysis. However, simple graphs struggle to capture multiple relationships between the same entities. Multilayer Networks (MLNs) can handle this by representing each relationship type using a separate layer of a MLN, thereby preserving semantics and providing a richer and comprehensive model for analysis~\cite{DBLP:journals/fdata/SantraIMCB23-short,DBLP:conf/iccS/SantraBC25-short,CommSurveyKimL15}. 

Graph analysis often utilizes metrics such as community detection, clustering, substructure identification, and centrality to name a few. Characteristics and attributes of MLN layers differ in HoMLNs in terms connections between nodes in different layers. Centrality metrics, such as degree, closeness, and betweenness, measure node importance -- both locally or globally -- within a network. Freeman~\cite{freeman1978centrality} formalized these measures after surveying and analyzing published papers on communication networks. Degree centrality (DC) specifically evaluates a node's importance based on its connections~\cite{6085951}. In weighted networks, this concept is extended as node strength, which is calculated by considering both the number of edges and their weights~\cite{opsahl2010node}. Unlike global metrics such as closeness and betweenness, DC is a local measure. 

A weighted multilayer network (MLN)\cite{KDIR2022/PavelSC22} consists of layers, each representing a simple graph with nodes representing entities and edges representing relationships. Within each layer, nodes are connected by intra-layer edges that may have  weights that capture the significance of the interaction. 

Below are formal definitions of weighted multilayer networks. 



\textbf{Definition 1.} 
A weighted multilayer network \textit{$\omega$-MLN}(G, X) is defined by two sets of graphs.  
The set $G = \{G_1, G_2, \dots, G_n\}$ contains \textit{edge-weighted simple graphs},  
where each layer simple graph $G_i = (V_i, E_i)$ is defined by a set of vertices $V_i$ and a set of edges $E_i$.  
An edge $(u, v) \in E_i$ connects vertices $u$ and $v$, where $u, v \in V_i$.  
Each graph $G_i$ is associated with a non-negative weight function:
\begin{equation*}
    W_i : E_i \rightarrow \mathbb{R}_{\ge 0}
\end{equation*}
    
where $W_i(u, v)$ denotes the weight of the edge $(u, v)$ in layer $i$.

The set of inter-layer graphs 
\begin{equation*}
    X = \{X_{1,2}, X_{1,3}, \dots, X_{n-1,n}\}
\end{equation*}
consists of \textit{bipartite graphs}.  
Each graph $X_{i,j} = (V_i, V_j, L_{i,j})$ is defined by vertex sets $V_i$ and $V_j$ and a set of links $L_{i,j}$,  
such that for every edge $l(a,b) \in L_{i,j}$, $a \in V_i$ and $b \in V_j$.

If inter-layer links are weighted, a corresponding weight function
\begin{equation*}
    W_{i,j} : L_{i,j} \rightarrow \mathbb{R}_{\ge 0}
\end{equation*}
may be defined to assign weights to the links in $X_{i,j}$.


In a weighted homogeneous multilayer network ($\omega$-HoMLN), the vertex sets across all layers have large overlap. Common set of vertices appears in each layer, but the edges and their corresponding weights may differ based on the type, frequency, or intensity of interactions represented in each layer. 

Traditional approaches for analyzing multilayer networks often convert an MLN into an aggregated simple graph using operations such as Boolean AND/OR on layer edges. Aggregation of weighted layers are a bit more complicated. In addition to Boolean AND/OR operations, a function for aggregating edge weights is needed. In the case of weighted multilayer networks, this might include operations such as sum, max, min, average or a user-defined aggregation function. In addition to losing layer-specific structural information and semantics, additional information of weights are also lost in the case of $\omega$-HoMLN.  For example, if two nodes are strongly connected in one layer and weakly connected or entirely disconnected in another, aggregation conceals these differences and may lead to confusing conclusions about node importance. 

To address this limitation, we adopt the \textit{decoupling-based approach} proposed in \cite{phdThesis/Santra20}, in which each layer of the $\omega$-HoMLN is analyzed independently. Let $G_1, G_2, \dots, G_n$ denote the layers of the $\omega$-HoMLN. Each layer is first analyzed for a specific graph metric (e.g., degree centrality), yielding layer-wise results. These partial results are then combined through a composition function to obtain a final score for nodes within the multilayer structure. This approach preserves the unique contribution of each layer and avoids loss of relationship details. The composition being heuristics-based, comparison with the aggregated ground truth makes sure the composition function preserves accuracy as closely as possible.

One of the goals of this paper is to compute degree centrality-based importance of nodes directly on $\omega$-HoMLNs while maintaining both accuracy and computational efficiency. Finding the most influential or highly connected nodes in a multilayer setting is important in many real-world contexts, such as identifying key researchers across different collaboration topics, or detecting influential individuals in animal interaction networks observed across multiple time periods. 

Contributions of this paper are:
\begin{itemize}
    \item[--] \textbf{Definition} of Degree Centrality for weighted HoMLNs using Boolean OR-aggregation and multiple numerical aggregation functions (sum and max)
    \item[--] \textbf{Algorithms} for directly computing weighted degree centrality nodes of weighted HoMLNs
    \item[--] \textbf{Decoupling-based approach} to preserve structure and semantics of MLNs
    \item[--] \textbf{Heuristics} to improve accuracy, precision, and efficiency of algorithms
    \item[--] \textbf{Experimental analysis} on large number of synthetic, real-world-like, and real-world data sets with diverse graph characteristics
    \item[--] \textbf{Accuracy, Precision, and Efficiency comparisons} with ground truth and naïve approach 
\end{itemize}


The rest of the paper is organized as follows: Section \ref{section:relevant-works} discusses related work. Section \ref{section:decoupling-approach} introduces the decoupling approach used for MLN analysis.  Section \ref{section:weighted-degree} formalizes weighted degree centrality for multilayer networks. Section \ref{section:ground-truth-and-naïve} introduces the definitions, ground truth, naïve approach to weighted-degree centrality. Section \ref{section:max} presents lemmas on max aggregations regarding computation using the decoupling approach, as well as composition computation and results. Section \ref{section:datasets-and-computation} discusses the data sets and computation environments. Section \ref{section:sum} presents lemmas on sum aggregations regarding computation using the decoupling approach, as well as composition computation and results. Section \ref{section:conclusion} discusses the conclusions.

\section{Relevant Works}
\label{section:relevant-works}

Degree centrality is one of the fundamental and widely used measures for identifying influential nodes, due to its simplicity and intuitive interpretation as a measure of a node's local connectivity. In simple graph settings, degree centrality captures local connectivity and has been extensively studied in social network analysis literature  \cite{freeman1978centrality, wasserman1994social, Klein_2010}. However, this formulation assumes unweighted edges and treats all edges as unit weight, thereby failing to capture the intensity or strength of interactions in real-world applications. 

To address this limitation, several extensions of degree centrality have been proposed for weighted graphs. Early work on weighted networks introduced the notion of node strength as the sum of weights associated with the edges incident on a node \cite{barrat2004architecture, Newman2004}. Building on this, \cite{opsahl2010node} proposed a generalized formulation that integrates both degree and strength through a tunable parameter, allowing the centrality to reflect the balance between connectivity and interaction weight. This formulation unifies classical degree centrality and strength-based centrality, allowing flexible modeling across different application domains. Similarly, \cite{YUSTIAWAN2015419} demonstrate the application of the Opsahl method~\cite{opsahl2010node} in social networks, highlighting the importance of interaction weights such as "mentions" and "replies" in identifying influential nodes. Prior work further highlights the importance of weight distribution and interaction intensity in centrality estimations \cite{Candeloro2016, Rachman2013}.

Despite these advancements in weighted graphs (that correspond to a single-layer in an MLN), extending degree centrality to multilayer networks (MLNs) introduces addition challenges. In MLNs, same type nodes may participate in multiple layers representing different types of relationships, and naively extending degree centrality is non-trivial due to the need to account for both inter-layer and intra-layer structure. A direct extension, proposed by \cite{Brodka2011}, defines cross-layer degree centrality based on multilayer neighborhood and aggregated edge weights across layers. While this approach captures  interaction spanning multiple layers, it relies on the aggregation of connections and does not explicitly preserve layer-specific structural properties. 


To address this limitation, \cite{phdThesis/Santra20} proposed a decoupling-based framework for analyzing multilayer networks without transforming them into aggregated single graph representation, thereby preserving the structural and semantic information present as part of MLN layers. The framework analyzes each layer independently and combines results using a composition function. \cite{KDIR2022/PavelSC22} applied the decoupling approach to degree centrality in homogeneous multilayer networks (HoMLNS) without weights, using composition heuristics to approximate rankings derived from aggregated graphs. This framework systematically analyzes the trade-off between computational efficiency and approximation accuracy.

Building on this idea, this paper extends the decoupling-based framework to weighted homogeneous multilayer networks by incorporating edge weights into the layer centrality computation. Unlike aggregation-based methods, this approach preserves intra-layer semantics by computing weighted degree (strength) independently on each layer and combining the results through a composition function. This design enables scalable, interpretable, and weight-aware estimation of degree centrality for large HoMLNs without reliance on application-specific assumptions.  Two kinds of weight aggregation Sum and max are discussed in this paper along with the tunable parameter $\alpha$ set to 1 which gives full emphasis for each edge weight.
\section{Decoupling-Based Approach for MLNs}
\label{section:decoupling-approach}

Multilayer networks (MLNs), also known as multiplexes, consist of several layers of weighted or unweighted simple graphs, where each layer captures a distinct type of interaction among a set of nodes of the same type. In homogeneous MLNs (HoMLNs), node types are same across layers, while edges and weights vary and represent different relationships or time periods of the same system.

MLNs provide a powerful data model for capturing complex systems in which multiple types of interactions must be preserved. Applications of MLNs span a wide range of domains, including modeling connectivity patterns in the brain, analyzing transportation networks, and studying dynamic behavior in social systems.




While HoMLNs preserve richer structural information than attributed graphs, their analysis presents new challenges. Most classical graph algorithm are designed for simple graphs (or single layer of an MLN) and cannot be directly applied to MLNs. A common workaround is to aggregate layers into a single graph using Boolean AND or OR operators for unweighted MLNs. For $\omega$-MLNs, in addition, a function is needed (e.g., Sum, Min, Max, Average, etc.) for aggregating weights. However, such conflation of layers can distort structural patterns and eliminate layer-specific information~\cite{KDIR2022/PavelSC22,BDA/ChakravarthySK19,BDS2023/MukundaRSC}. 


\subsection{Analyzing Multilayer Networks}
Several approaches have been used for computing centrality and other network metrics in homogeneous (HoMLNs), heterogeneous (HeMLNs), and hybrid (HyMLNs) multilayer networks. These approaches vary in how they process the layered structure of MLNs and the extent to which they preserve layer-level semantics. Figure~\ref{fig:mln-approaches} illustrates three main alternatives explored in the literature. 



\begin{itemize}
    \item[--] \underline{Aggregation Approaches:} As shown in Figure~\ref{fig:mln-approaches}(a), MLNs are conflated into a single graph using Boolean (and weight aggregation, if needed) after which standard simple graph algorithms (weighted or unweighted) are applied~\cite{LayerAggDomenicoNAL14, berenstein2016multilayer}. Although computationally convenient, this approach may distort connectivity patterns and obfuscate layer semantics~\cite{phdThesis/Santra20,DBLP:journals/dke/SantraKBC22}.
    \item[--] \underline{Whole-MLN Algorithms:} As shown in Figure~\ref{fig:mln-approaches}(c), algorithms based on random walk methods operate directly over the entire MLN structure, preserving semantics, but incur high computational cost and limited flexibility~\cite{KDIR2022/PavelSC22}.

\end{itemize} 
    

\begin{figure}[!tbh]
    \centering
    \includegraphics[width=3.2in,keepaspectratio]{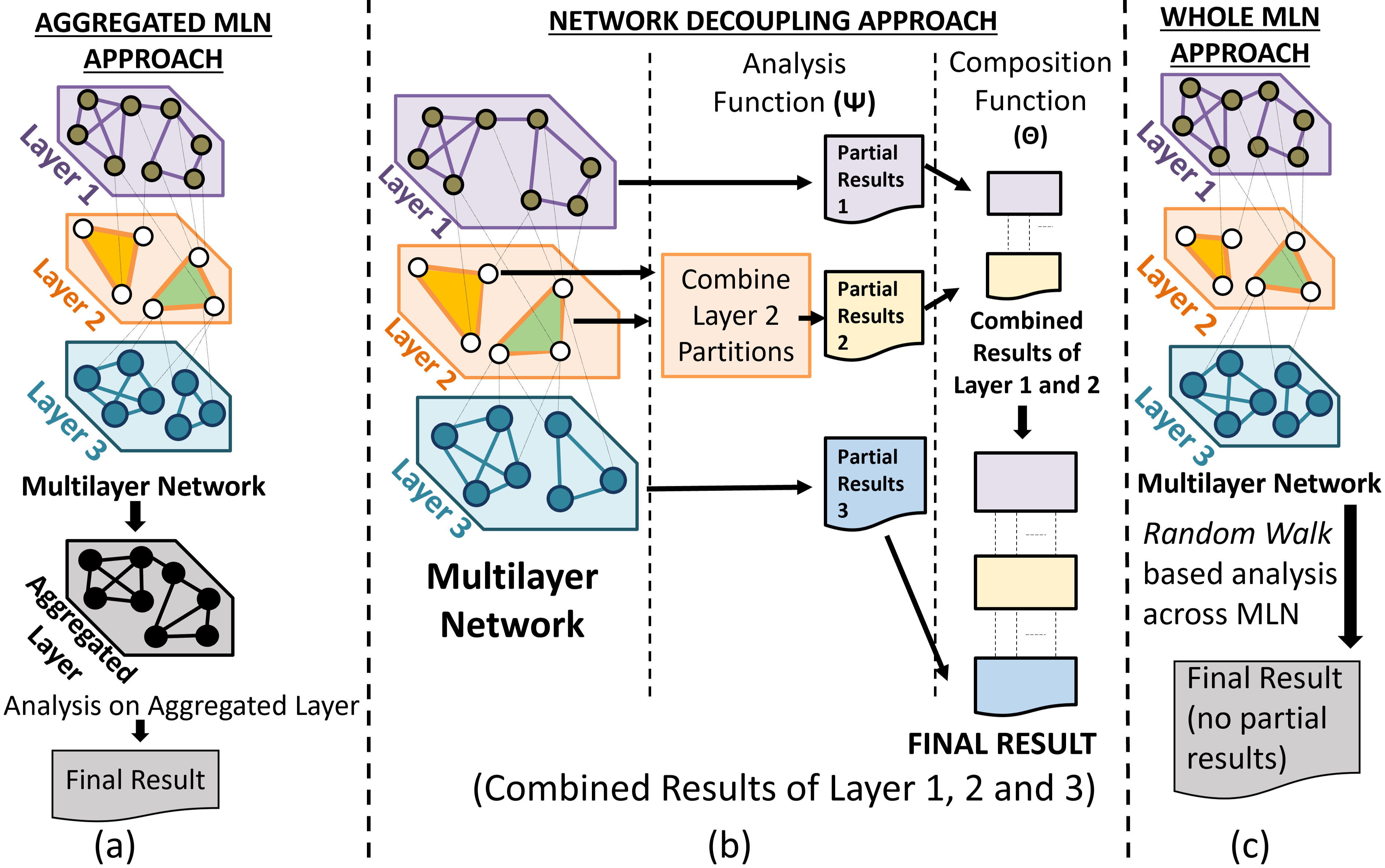}
    \caption{\small (a) Lossy, (b) Decoupling, (c) Whole MLN approaches~\cite{DBLP:journals/dke/SantraKBC22-short}}
   \label{fig:mln-approaches}
\end{figure}

\subsection{The Decoupling Approach}



To address the limitations of aggregation (e.g., loss of semantics) and avoid the cost of whole-MLN methods, the decoupling-based approach was introduced by Santra~\cite{phdThesis/Santra20}. Rather than flattening the network, this framework processes each layer independently using the standard single-layer algorithms and then combines the partial results through a composition function.

Originally developed for unweighted MLNs using Boolean AND/OR aggregation, this approach can also be extended for $\omega$-MLNS, where both structural and weight aggregation choices influence the design of the composition function.

Figure~\ref{fig:decoupling_approach} illustrates the decoupling framework for centrality computation. The general approach for any analysis function $\Psi$ is shown in Figure~\ref{fig:mln-approaches}(b). In the first step, each layer of the HoMLN is analyzed \underline{independently} as a separate graph using the designated analysis function. Since layers in a HoMLN share the same node subset but differ in edge sets and weights, this approach preserves the semantic integrity of each layer. The result of this function is a partial result from each layer. In the second step, the partial results from any two layers are combined using a composition function $\Theta$, which integrates the outcomes to produce results that reflect the multilayer structure as a whole. 




\begin{figure}[!tbh]
    \centering
    \includegraphics[width=\columnwidth,keepaspectratio]{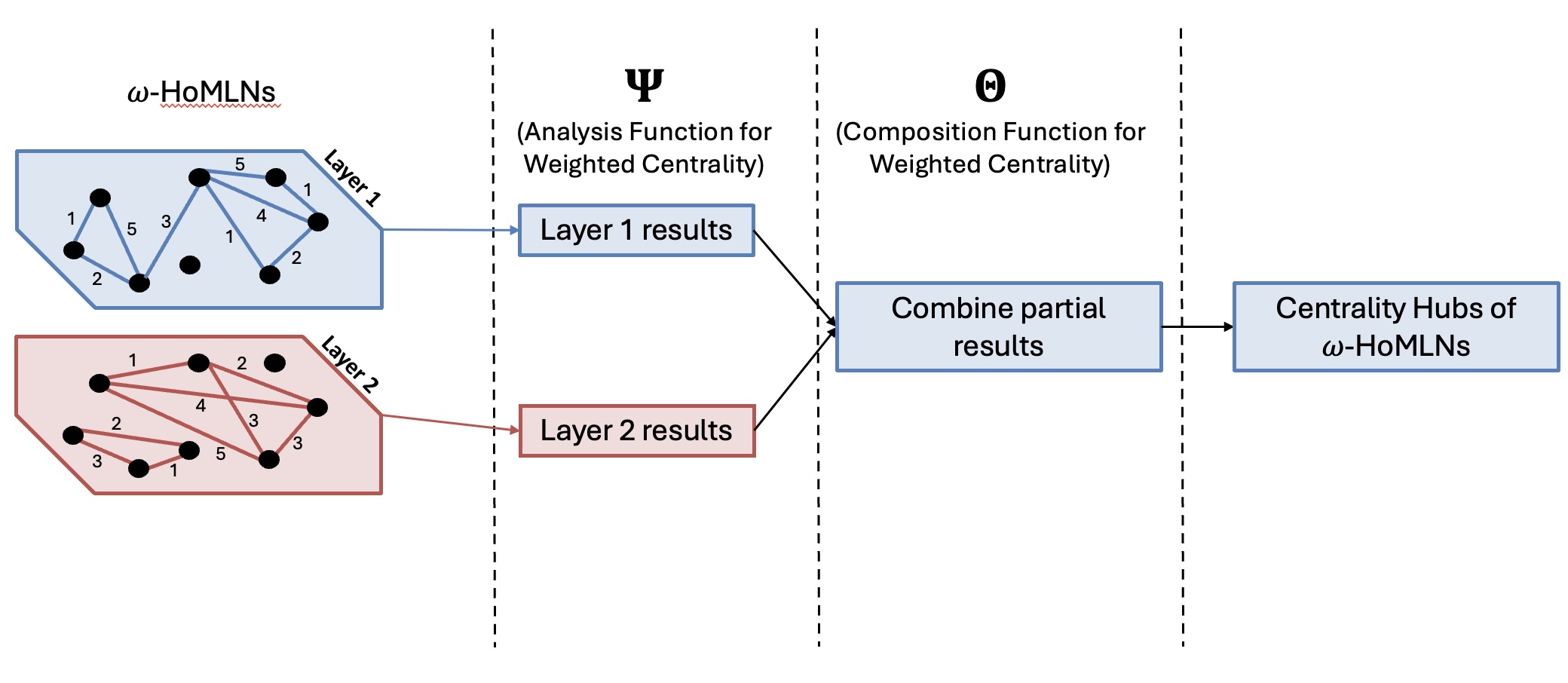}
    \caption{Decoupling approach for Centrality computation in $\omega$-MLNs}
    \label{fig:decoupling_approach}
\end{figure}
The primary challenge in the decoupling-based approach is to identify minimal information to carry over from each layer analysis and developing heuristics based on intuition for the aggregation method used. Reaching ground truth accuracy can be a challenge as well.

\subsubsection{Advantages of the Decoupling Approach}




\begin{itemize}
    \item[--] Supports parallelism, as layers can be processed independently, improving overall computational efficiency.
    \item[--] Preserves layer semantics by avoiding lossy aggregation. 
    \item[--] Utilizes existing single-layer  algorithms
\end{itemize}

\subsubsection{Challenges of the Decoupling Approach}

\begin{itemize}
    \item[--] Due to heuristic-based algorithms necessitated by incomplete information, reaching ground truth accuracy is harder to achieve compared to the conflated single graph approaches. 
    \item[--] Requires new algorithm for composition using heuristics.
    \item[--] Compensating for inter-layer awareness, determining minimal information to be carried over from layers.
\end{itemize}

\section{Degree Centrality for $\omega$-HoMLNs}
\label{section:weighted-degree}

In weighted graphs, degree centrality is commonly extended to not only account for the number of edges incident on a node but also for the strength of those connections. This is particularly important in real-world networks, where edge weights represent interaction intensity, frequency, or cost. However, when such graphs are extended into HoMLNs, computing degree centrality becomes significantly more complex.

As mentioned earlier, recent work on degree centrality in multilayer networks has largely focused on unweighted graphs, using techniques such as Boolean aggregation (e.g., OR or AND operations) and heuristic-based composition. As discussed in Section~\ref{section:relevant-works}, \cite{opsahl2010node} proposed a generalized degree centrality measure that incorporates both the degree and the strength of a node using a tunable parameter \( \alpha \): 

\begin{equation}
    \text{Degree Centrality} (i) = k_i^{1 - \alpha} \times s_i^{\alpha}
    \label{eqn:strength-equation}
\end{equation}
where $k_i$ denotes the (unweighted) degree of node $i$ and $s_i$ denotes its strength, defined as the sum of the weights of edges incident on $i$.

In this work, we place full emphasis on the strength of each interaction so, we set  \(\alpha = 1\) for the formula shown in~\ref{eqn:strength-equation}. Under this setting, the generalized formula reduces to~\ref{eqn:sum-of-strengths}:
\begin{equation}
    {DC}_i = s_i
    \label{eqn:sum-of-strengths}
\end{equation}
where $s_i$ is the sum of the weights of the edges incident at node $i$. By changing the value of $\alpha$, degree and strength contributions of weights can be customized based on application semantics.

This formulation establishes weighted degree centrality in weighted HoMLNs as a strength-based measure that directly reflects the intensity of interactions incident on a node. 
\section{Ground Truth (GT) and Naïve Accuracy}
\label{section:ground-truth-and-naïve}
\subsection{Ground Truth}
To evaluate the accuracy and efficiency of the proposed heuristics, the ground truth must be established. In this paper, for weighted-degree centrality, the ground truth is constructed using Boolean-OR aggregation of all layers, meaning that an edge is included in the aggregated network if it appears in at least one of the layers. For edges that appear in multiple layers, weights are aggregated using one of the following strategies: sum, maximum, minimum, or average. Among these strategies, this paper on weighted-degree centrality focuses on sum and maximum aggregation methods. Note that these functions are separate from the sum function used for strength in Equation~\ref{eqn:strength-equation}.

\begin{itemize}
    \item Sum aggregation captures cumulative interaction strength across layers.
    \item Max aggregation preserves the strongest observed interaction between node pairs. 
\end{itemize}


After the edge weights are aggregated, the strength of each node is computed as the sum of the weights of all edges incident to that node in the aggregated graph as defined in equation~\ref{eqn:sum-of-strengths}. The aggregation graph produced by Boolean-OR combined with sum or max weights is then used to compute degree hubs for the GT weighted graph. Nodes with strengths greater than the graph average are identified as degree hubs and are used as the ground truth for evaluating heuristic performance. Jaccard coefficient, precision, and recall are used for comparing the results with the ground truth.

All heuristics proposed in this paper are tested on two-layer HoMLNs, although the methodology can be generalized to more than two layers. For the decoupling approach, each layer is independently analyzed as outlined earlier in Section~\ref{section:decoupling-approach}. 

\subsection{Naïve Accuracy:} As a baseline for measuring improvement in accuracy and performance of our heuristics, we define a naïve approach for composition using the decoupling-based framework that uses minimal information from each layer.  For each layer, we identify the degree hubs in each individual layer, and take the union (as our aggregation operator is Boolean OR) of the resulting hub sets as the estimated set of hubs for the HoMLN. This method does not utilize any additional information beyond the identified layer hubs, which is the minimal information, and applies a simple heuristic. Naïve degree hub computation for HoMLN also involves minimal amount of computation as part of the composition function.  While this method is computationally inexpensive, the naïve approach is generally not accurate and is unlikely to match the ground truth accuracy except in pathological cases~\cite{phdThesis/Santra20, KDIR2022/PavelSC22}. Therefore, this method establishes a low watermark for accuracy, which can be used to further evaluate the performance of the proposed heuristics.
\section{Weighted Degree Centrality Using Boolean-OR and Max Aggregation}
\label{section:max}

Under max aggregation, the weight of each edge in the ground truth is determined by taking the maximum weight among \textit{overlapping} edges across layers. If an edge appears in both layers, only the larger of the two weights is retained in the aggregated network. 


\subsection{Definitions}
Let the two layers of an HoMLN be denoted as $x$ and $y$, with a shared node set $V$ (also the nodes in the ground truth graph). Let $E_x$ and $E_y$ be edges of layers $x$ and $y$, respectively. For any node $u \in V$, the analysis function computes its layer-wise strength as:

\begin{equation*}
    s_x(u) = \sum_{(u, v) \in E_x} w_x(u, v), \quad
    s_y(u) = \sum_{(u, v) \in E_y} w_y(u, v)
\end{equation*}

\noindent\underline{Ground Truth Using Boolean-OR and Max:} The ground truth graph $G_{GT}$ is formed using \textit{Boolean-OR} on the edge structure and \textit{max} aggregation for the weights on those edges:  
\begin{equation*}
    w_{GT}(u,v) = 
    \begin{cases}
    \max\{w_x(u,v), w_y(u,v)\}, & (u,v) \in E_x \cap E_y \\
    w_x(u,v), & (u,v) \in E_x \setminus E_y \\
    w_y(u,v), & (u,v) \in E_y \setminus E_x \\
    \end{cases}
\label{eqn:max_gt_weights}
\end{equation*}

Using this, the corresponding ground truth strength of node $u$ is:
\begin{equation*}
    s_{GT}(u) = \sum_{v \in V} w_{GT}(u,v)
\end{equation*}

The ground truth graph (or global) average strength is: 
\begin{equation*}
    \bar{s}_{GT} = \frac{1}{|V|} \sum_{u \in V} s_{GT}(u) 
\end{equation*}

A node $u$ is considered a ground truth hub if $s_{GT}(u) > \bar{s}_{GT}$ 

\noindent\underline{Bounds for GT edge strength for max aggregation:} 
Let $G_{GT}$ be the ground truth graph using max aggregation and let $V_{GT}$ and $E_{GT}$ be, respectively,  the set of nodes and edges in the ground truth graph. If we are able to compute the correct strength of an edge in the GT graph using strength information of the edges in the layers, then there is no need to estimate it during composition. However, due to the use of max function for the aggregation this is not possible unless we keep the \textbf{edge strength information for all overlapping edges.} This violates the independent analysis principle behind the decoupling approach and hence the composition function will not have overlap information \underline{even if we keep edge strength information} from each layer! In other words, $s_{GT}(u)$ cannot be computed exactly from $\{s_x(u), s_y(u)\}$ alone because max aggregation depends on edge-level overlap and overlap information is not available during composition\footnote{Note that for overlapping edges, the weights can be different in each layer.}. However, we can compute the upper- and lower-bounds of the GT edge strengths as follows:
\begin{equation*}
s_{GT}^{LB}(u)=\max\{s_x(u), s_y(u)\},
s_{GT}^{UB}(u)=s_x(u) + s_y(u) 
\end{equation*}
where $s_{GT}^{LB}(u)$ corresponds to complete overlap (Lemma~\ref{lemma:max-lemma-b}), and $s_{GT}^{UB}(u)$ corresponds to no overlap, satisfying                                      
\begin{equation*}
    s_{GT}^{LB}(u) \le s_{GT}(u) \le s_{GT}^{UB}(u)
\end{equation*}

Intuitively, $s_{GT}^{LB}(u)$ corresponds to the case of overlap being maximal for the edges incident to $u$, while $s_{GT}^{UB}(u)$ corresponds to the other extreme case of no overlap between the layers. 

\subsection{Lemmas and Intuition}
As Max aggregation of layer edge weights is different from sum used for strength computation, exact computation of strength in the ground truth is not possible. The following lemmas formalize useful monotonicity properties and clarify when a hub can or cannot be inferred from layer-wise hub membership.

\begin{lemma}
    when a ground truth aggregation graph uses max function, the ground-truth node strength $s_{GT}(u)$ of a node $u$ cannot, in general, be determined solely from the layer-wise strengths $\{s_x(u), s_y(u)\}$ unless edge-level overlap information across layers is available.
    \label{lemma:max-lemma-a}
\end{lemma}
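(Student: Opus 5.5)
The plan is to prove the non-determinability claim by exhibiting a counterexample. I will build two two-layer $\omega$-HoMLNs on the same node set, and a node $u$, such that $u$ has identical layer-wise strengths $\{s_x(u), s_y(u)\}$ in both networks but different ground-truth strengths $s_{GT}(u)$ under Boolean-OR with max aggregation. If such a pair exists, no function $f$ with $s_{GT}(u) = f(s_x(u), s_y(u))$ can exist. The statement says ``in general'', so a single counterexample is enough.

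For the construction, take $V=\{u,a,b\}$. In the first network (call it $M_1$) let $E_x=\{(u,a)\}$ and $E_y=\{(u,a)\}$ with $w_x(u,a)=w_y(u,a)=1$. This is complete overlap, so $s_x(u)=s_y(u)=1$. By the definition of $w_{GT}$ for edges in $E_x\cap E_y$, the aggregated graph has the single edge $(u,a)$ with weight $\max\{1,1\}=1$, giving $s_{GT}(u)=1$.

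In the second network (call it $M_2$) let $E_x=\{(u,a)\}$ and $E_y=\{(u,b)\}$, both with weight $1$. Again $s_x(u)=s_y(u)=1$. This time both edges fall under the $E_x\setminus E_y$ and $E_y\setminus E_x$ cases, so they survive separately in $G_{GT}$, giving $s_{GT}(u)=2$.

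The two networks therefore agree on $(s_x(u),s_y(u))=(1,1)$ but disagree on $s_{GT}(u)$, which proves the lemma. I would also note that these two values are exactly $s_{GT}^{LB}(u)=\max\{s_x,s_y\}$ and $s_{GT}^{UB}(u)=s_x+s_y$. This shows the gap between the bounds stated above is attained and is nonzero whenever both layer strengths are positive.

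To address the ``unless overlap information is available'' clause, I would add a short converse remark. Given the overlap set $O_u=\{v:(u,v)\in E_x\cap E_y\}$ together with the layer weights on those edges, one gets
\begin{equation*}
s_{GT}(u)=s_x(u)+s_y(u)-\sum_{v\in O_u}\min\{w_x(u,v),w_y(u,v)\},
\end{equation*}
since $\max(p,q)=p+q-\min(p,q)$. This identity makes explicit exactly which edge-level information is missing during composition.

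I do not expect a real obstacle. The only point needing care is formalising ``cannot be determined'' as the non-existence of a function of $(s_x(u),s_y(u))$. The same-inputs, different-outputs pair settles this directly.
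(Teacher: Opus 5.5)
Your proposal is correct and uses the same idea as the paper's proof. The paper argues that independent layer analysis discards edge-level overlap and weight information, so different overlap structures can give the same pair $(s_x(u),s_y(u))$ but different $s_{GT}(u)$; it asserts this without a witness, and you supply one (complete overlap versus no overlap, attaining $s_{GT}^{LB}(u)$ and $s_{GT}^{UB}(u)$ respectively). Your identity $s_{GT}(u)=s_x(u)+s_y(u)-\sum_{v\in O_u}\min\{w_x(u,v),w_y(u,v)\}$ also makes explicit the sufficiency direction behind the ``unless'' clause, which the paper leaves implicit.
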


\begin{proof}
    Under max aggregation, the strength of a node $u$ in the ground truth graph is 

    \begin{equation*}
        s_{GT}(u) = \sum_{(u,v) \in E_{GT}} \max\{w_x(u,v), w_y(u,v)\}
    \end{equation*}

Evaluating this expression requires knowing, for each edge $(u,v) \in E_{GT}$, whether the edge appears in one layer or in both layers, and, in the latter case, comparing the corresponding weights $w_x(u,v)$ and $w_y(u,v)$ to determine the maximum weight. 

When each layer is processed \textit{independently} (which is the underlying assumption for the decoupling-based framework), the analysis function produces only the aggregate layer-wise strengths $s_x(u)$ and $s_y(u)$. It does not retain information about individual edges, their presence across layers, or information of neighbors between layers. As a result, distinct ground truth edge sets, $E_{GT}$, with different overlap structures and max weights, can influence the same pair of layer-wise strengths $\{s_x(u), s_y(u)\}$ while yielding different values of $s_{GT}(u)$ under max aggregation. 
\end{proof}

\begin{lemma}
    For every node $u \in V_{GT}$, the ground truth strength under max aggregation satisfies 
    \begin{equation*}
        s_{GT}(u) \ge \max\{s_x(u), s_y(u)\}.
    \end{equation*}
    \label{lemma:max-lemma-b}
\end{lemma}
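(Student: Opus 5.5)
The plan is a direct edge-by-edge comparison, using only the definition of $w_{GT}$ and the non-negativity of the weight functions $W_i$ from Definition 1. By symmetry between the two layers, it suffices to prove $s_{GT}(u) \ge s_x(u)$; the same argument with $x$ and $y$ swapped gives $s_{GT}(u) \ge s_y(u)$, and together they give $s_{GT}(u) \ge \max\{s_x(u), s_y(u)\}$.

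First I would write $s_{GT}(u)$ as a sum over the edges of $E_{GT} = E_x \cup E_y$ incident to $u$. Boolean-OR aggregation makes every edge $(u,v)\in E_x$ an edge of $E_{GT}$, so this sum splits into two disjoint parts. The first part is over edges $(u,v) \in E_x$. The second is over edges $(u,v) \in E_y \setminus E_x$.

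Next I would bound each term of the first part. From the case definition of $w_{GT}$:
\begin{itemize}
\item if $(u,v)\in E_x\cap E_y$, then $w_{GT}(u,v)=\max\{w_x(u,v),w_y(u,v)\}\ge w_x(u,v)$;
\item if $(u,v)\in E_x\setminus E_y$, then $w_{GT}(u,v)=w_x(u,v)$.
\end{itemize}
So every term in the first part is at least $w_x(u,v)$, and the first part is at least $\sum_{(u,v)\in E_x} w_x(u,v) = s_x(u)$.

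The second part consists of the weights $w_y(u,v)$ of edges in $E_y\setminus E_x$, each of which is $\ge 0$ because $W_y$ maps into $\mathbb{R}_{\ge 0}$. Adding a non-negative quantity to the first part therefore gives $s_{GT}(u)\ge s_x(u)$.

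There is no real obstacle; the only point needing care is bookkeeping. Each undirected edge incident to $u$ must be counted exactly once, and the decomposition of $E_{GT}$ must be genuinely disjoint. I would also remark that the bound is attained under complete overlap with the dominating layer's weights uniformly larger. This justifies calling $\max\{s_x(u), s_y(u)\}$ the lower bound $s_{GT}^{LB}(u)$, as the paper does.
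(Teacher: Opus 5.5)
Your proposal is correct and follows the paper's proof: both compare $w_{GT}(u,v)\ge w_x(u,v)$ edge by edge over the edges of $E_x$ incident to $u$, sum to get $s_{GT}(u)\ge s_x(u)$, and conclude by symmetry in $x$ and $y$. Your explicit split of $E_{GT}$ into $E_x$ and $E_y\setminus E_x$, together with the appeal to non-negativity of $W_y$, makes precise a step that the paper's summation leaves implicit.
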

\begin{proof}
    Consider node $u$ and any incident edge $(u,v) \in E_x$. In the ground truth graph constructed using Boolean-OR and max aggregation, the contribution of this edge to $s_{GT}(u)$ is:
    \begin{equation*}
        w_{GT}(u,v) = 
        \begin{cases}
            w_x(u,v), & (u,v) \notin E_y, \\
            
            \max\{w_x(u,v), w_y(u,v)\}, & (u,v) \in E_x \cap E_y \\
        \end{cases}
        \label{eqn:max-proof-b}
    \end{equation*}

    In both cases, $w_{GT}(u,v) \ge w_x(u,v)$. Summing over all edges incident to $u$ in $Layer_x$ results in:
    \begin{equation*}
        s_{GT}(u) \ge \sum_{(u,v) \in E_x} w_x(u,v) = s_x(u)
    \end{equation*}

    By symmetry, the same argument holds for $Layer_y$, implying $s_{GT}(u) \ge s_y(u)$. Therefore, 
    \begin{equation*}
        s_{GT}(u) \ge \max\{s_x(u), s_y(u)\}
    \end{equation*}
\end{proof}

\begin{lemma}
    Under the Boolean-OR composition and max aggregation, non-hub status in every layer does not guarantee non-hub status in the ground-truth aggregation graph. 
    \label{lemma:max-lemma-c}
\end{lemma}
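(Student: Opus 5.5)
The statement is a non-implication, so the plan is to exhibit an explicit counterexample: a two-layer $\omega$-HoMLN on a shared node set $V$ with a node $u$ that fails to be a hub in layer $x$ and in layer $y$ (that is, $s_x(u) \le \bar{s}_x$ and $s_y(u) \le \bar{s}_y$), yet satisfies $s_{GT}(u) > \bar{s}_{GT}$ under Boolean-OR with max aggregation. The guiding intuition comes from the bounds preceding Lemma~\ref{lemma:max-lemma-a}. When the edges at $u$ do not overlap across layers, $s_{GT}(u)$ reaches its upper bound $s_x(u)+s_y(u)$. Meanwhile, heavy edges among the other nodes can overlap completely, so their GT strengths collapse to the lower bound $\max\{s_x,s_y\}$ of Lemma~\ref{lemma:max-lemma-b}. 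Hence $u$ gains relative to the average in the GT graph while every other node does not.

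\textbf{Step 1: build the example.} I would take $V=\{u,a,b,c,d\}$ with the following edges:
\begin{itemize}
\item a weight-$1$ edge $(u,a)$ appearing only in layer $x$;
\item a weight-$1$ edge $(u,b)$ appearing only in layer $y$;
\item an edge $(c,d)$ of weight $w$ appearing in both layers with the same weight.
\end{itemize}
The layer strengths are then
\[
s_x(u)=s_x(a)=1,\quad s_x(c)=s_x(d)=w,\quad s_x(b)=0,
\]
and symmetrically in layer $y$ with the roles of $a$ and $b$ swapped. Each layer average is therefore $(2+2w)/5$. Choosing $w$ large enough, say $w \ge 2$, gives $1 \le (2+2w)/5$, so $u$ is a non-hub in both layers.

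\textbf{Step 2: compute the GT graph.} The GT graph contains the edges $(u,a)$ and $(u,b)$, each of weight $1$, and the edge $(c,d)$ of weight $\max\{w,w\}=w$. This gives $s_{GT}(u)=2$ and $\bar{s}_{GT}=(4+2w)/5$. The condition $2 > (4+2w)/5$ is equivalent to $w<3$. Taking $w=2$ therefore works: the layer averages are $6/5 \ge 1$, and $\bar{s}_{GT}=8/5<2$. So $u$ is a GT hub despite being a non-hub in every layer, which proves the lemma.

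\textbf{Main obstacle.} The only delicate point is tuning the weights so that both inequalities hold simultaneously. The value $w$ must be large enough to keep $u$ at or below each layer average, yet small enough that the overlap-induced loss in total weight lowers the GT average below $s_{GT}(u)$. I would verify the window $2 \le w < 3$ explicitly. I would also note that the definition of a hub uses strict inequality, so the tie at $w=5/3$ is irrelevant. Finally, I would remark briefly that this is exactly why the naïve union heuristic can miss true hubs.
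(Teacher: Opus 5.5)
Your counterexample is correct, and unlike the paper you prove the non-implication directly by exhibiting a witness. The paper's proof is qualitative. By Lemma~\ref{lemma:max-lemma-b}, each edge contributes to $s_{GT}(u)$ at least as much as to either layer, so $s_{GT}(u)$ may exceed $s_x(u)$ and $s_y(u)$. Since membership is tested against $\bar{s}_{GT}$ rather than the layer averages, the layer inequalities do not force $s_{GT}(u)\le\bar{s}_{GT}$. An actual witness appears only afterwards, in the five-node example of Table~\ref{tab:strengths-e2}: node $D$ has layer strengths $4,4$ against averages $4.4,4.8$, and GT strength $6$ against $5.6$.

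Your version is self-contained, and it handles a point the paper's argument glosses over. Summing Lemma~\ref{lemma:max-lemma-b} over all nodes gives $\bar{s}_{GT}\ge\max\{\bar{s}_x,\bar{s}_y\}$, so the threshold rises as well, and growth of $s_{GT}(u)$ alone decides nothing. Your construction controls both sides on purpose:
\begin{itemize}
\item there is no overlap at $u$, so $s_{GT}(u)$ attains the upper bound $s_x(u)+s_y(u)=2$;
\item the heavy edge $(c,d)$ overlaps fully, so $c$ and $d$ sit at the lower bound and $\bar{s}_{GT}$ rises only from $6/5$ to $8/5$.
\end{itemize}
The parametric window also shows that in this family the effect vanishes once the overlapping mass is too large ($w\ge 3$). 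In exchange, the paper's reasoning names the general mechanism without tying it to a particular configuration.

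One small slip: the boundary of your layer condition $1\le(2+2w)/5$ is $w=3/2$, not $5/3$. The admissible window is therefore $3/2\le w<3$, which does not affect your choice $w=2$.
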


\begin{proof}
    Hub membership is determined by comparing node strength to the average node strength of the corresponding graph. Consider a node $u$ such that 

    \begin{equation*}
        s_x(u) \le \bar{s}_x
        \quad \text{and} \quad
        s_y(u) \le \bar{s}_y
    \end{equation*}

    so that $u$ is not a hub in either layer. Under the Boolean-OR composition, the ground-truth graph may contain edges incident to $u$ that originate from different layers. By Lemma \ref{lemma:max-lemma-b}, the contribution of each edge to the ground-truth strength is never less than its contribution in either of the individual layer. Consequently, the ground-truth strength of $u$ may exceed both $s_x(u)$ and $s_y(u)$

    Since hub membership in the ground-truth graph is evaluated with respect to the average node strength $\bar{s}_{GT}$ rather than the layer averages, the relationship

    \begin{equation*}
        s_x(u) \le \bar{s}_x
        \quad \text{and} \quad
        s_y(u) \le \bar{s}_y
    \end{equation*}
    does not imply 
    \begin{equation*}
        s_{GT}(u) \le \bar{s}_{GT}
    \end{equation*}
    Therefore, non-hub status in each layer does not guarantee non-hub status in the ground-truth aggregated graph. 
\end{proof}

Lemmas \ref{lemma:max-lemma-a}, \ref{lemma:max-lemma-b}, and \ref{lemma:max-lemma-c} together demonstrate the limitations of inferring hub membership under the max-aggregation. Lemma \ref{lemma:max-lemma-a} shows that the ground-truth strength of a node cannot, in general, be determined solely from the layer-wise strengths {$s_x(u), s_y(u)$} without edge-level overlap information. Lemma \ref{lemma:max-lemma-b} shows that the ground truth is bounded below by the larger of the layer strengths. While this monotonicity property guarantees that $s_{GT}(u)$ cannot be less than max $(s_x(u), s_y(u))$, it does not determine hub membership in the aggregated graph. Lemma \ref{lemma:max-lemma-c} further demonstrates that non-hub status in every layer does not necessarily transfer to the aggregated graph. Hub classification depends on whether $s_{GT}(u)$ exceeds the average node strength $\bar{s}_{GT}$, and max aggregation can increase the strengths of other nodes and thereby raise the global average. Consequently, layer hub membership does not necessarily transfer to the aggregated network. The following example illustrates how hub status can change under the max aggregation even when layer strengths are known.

\begin{figure}[h!]
    \centering

    \begin{minipage}{0.4\columnwidth}
    \centering
    
    \includegraphics[width=\columnwidth]{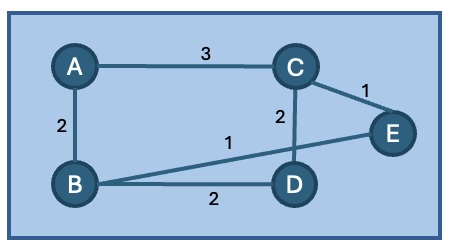}\\[0.05em]
    \textbf{Layer\textsubscript{X}}
    \end{minipage}
    \hfill
    \begin{minipage}{0.4\columnwidth}
    \centering
    \includegraphics[width=\columnwidth]{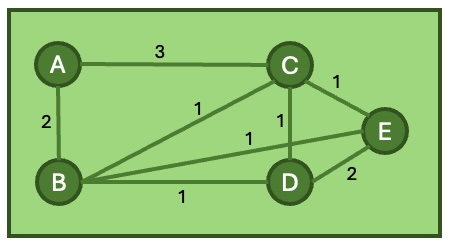}\\[0.05em] 
    \textbf{Layer\textsubscript{Y}}
    \end{minipage}
    \caption{Layer-wise weighted edges in the multilayer network.}
    \label{fig:max-example-for-hubs-xy}

    \begin{minipage}{0.4\columnwidth}
    \centering
    \includegraphics[width=\columnwidth]{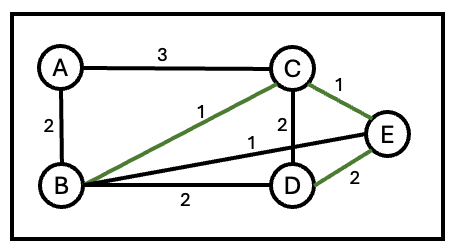} \\[0.05em]
    \end{minipage}

    \caption{Ground truth network ($G_{GT}$) using Boolean-OR and Max aggregation.}
    \label{fig:max-example-for-hubs-gt}
\end{figure}

\begin{table}[h!]
    \caption{Node strengths in $Layer_X$, $Layer_Y$, and the ground truth $GT$}
    \label{tab:strengths-e2}
    \begin{tabularx}{\columnwidth}{|c|X|X|X|}
    \hline
    \textbf{Node} & \textbf{$Layer_X$} & \textbf{$Layer_Y$} & \textbf{GT} \\
    \hline
        A & 5 & 5 & 5 \\ \hline
        B & 5 & 5 & 6 \\ \hline
        C & 6 & 6 & 7 \\ \hline
        D & 4 & 4 & 6 \\ \hline
        E & 2 & 4 & 4 \\ \hline
        average & 4.4 & 4.8 & 5.6 \\
    \hline
    \end{tabularx}
\end{table}

In this example, node $A$ is a hub in both $Layer_X$ and $Layer_Y$ since 
    \begin{equation*}
        s_X(A) = 5 > \bar{s}_X=4.4, \quad s_Y(A) = 5 > \bar{s}_Y=4.8.
    \end{equation*}

However, in the ground-truth graph, 
    \begin{equation*}
        s_{GT}(A) = 5 < \bar{s}_{GT}=5.6,
    \end{equation*}

so node $A$ is not a hub in the ground truth graph. This shows that hub membership in both layers does not imply hub membership in the aggregated graph. Conversely, node $D$ is not a hub in either layer since
    \begin{equation*}
        s_X(D) = 4 > \bar{s}_X=4.4, \quad s_Y(D) = 4 > \bar{s}_Y=4.8.
    \end{equation*}
Yet, in the ground-truth graph, 
    \begin{equation*}
        s_{GT}(D) = 6 < \bar{s}_{GT}=5.6,
    \end{equation*}
So node D becomes a  hub in $\bar{s}_{GT}$. So, non-hub status in both layers does not guarantee non-hub status after max aggregation.

Therefore, node $A$ and $D$ show that max aggregation does not preserve hub membership. A node may lose hub status after aggregation, while another may gain hub status, even when layer-wise strengths are known. 

\subsection{Composition Strategies for Max Aggregation} 
Since exact max-aggregation strength requires edge-level overlap, composition, in the absence of that information, has to use  heuristic for estimates. 

\underline{(1) Lower bound (LB) Heuristics} The composition strength estimate is $\hat{s}_{LB}(u) = \max\{s_x(u), s_y(u)\}$, which is conservative under overlap and avoids over-counting. This strategy can miss true hubs when contributions from distinct neighbors across layers accumulate in $G_{GT}$.

\underline{(2) Upper bound (UB) Heuristic} The composition strength estimate is $\hat{s}_{UB}(u) = s_x(u) + s_y(u)$, which is optimistic and ensures potentially strong nodes are not excluded. While this may overestimate $s_{GT}(u)$ when overlap is high, it is often effective for hub recovery because it prioritizes recall, minimizing the risk of missing true ground-truth hubs. 

We first describe the data sets used in this paper. We describe synthetic data sets, along with generation of layers, creation of aggregate graph for GT. We also indicate the real-world data sets used for experimental analysis and heuristic validation.

\section{Data Sets and Computation Environments}
\label{section:datasets-and-computation}
Synthetic and real-world datasets are used in this work to validate the accuracy, efficiency, and scalability of algorithms proposed for both sum and max aggregation degree centrality of $\omega$-MLNs. They are also validated for diverse graph/MLN characteristics (for synthetic datasets), including weights, to make sure heuristics-based algorithms perform well across diverse graph/MLN characteristics. We also use real-world data sets to assert our heuristics. However, We do not have control over real-world dataset characteristics. Hence, both real-world and MLNs with diverse graph characteristics are tested to make sure our proposed algorithms works across varied datasets.  Synthetic datasets were generated from base graphs generated using \textbf{subgen}~\cite{subgen} (from the AI Lab at Washington State University) and \textbf{PaRMAT}~\cite{wsvr} as described in the work by Pavel~\cite{KDIR2022/PavelSC22} and Mukunda~\cite{msThesis/Mukunda21}.

These  datasets were extended to add weights randomly in a given range to generate weighted graphs. These weighted graphs were then converted into layers by specifying: number of layers, edge split or distribution among layers, and overlap as a percentage across layers. Note that all layers contain the same set of nodes\footnote{This is not a requirement of HoMLNs. Having non-identical sets generates disconnected graphs in ground truth. Our heuristics works for them as well.}.  This approach allows fine-grained control over structural properties such as edge distribution between layers, percentage of edge overlap, and weight ranges. This makes them suitable for benchmarking under a variety of multilayer configurations. Real-world datasets~\cite{collier_2024_14168300, hu2020ogb} were obtained from domain-specific studies, where each layer represents observations from different time points. 

Once the layers are generated or identified (in case of real-world datasets), ground truth graph need to be generated from the layers using a specified Boolean operator and the function for aggregating the strengths from layers for the ground truth. We have chosen sum and max function for this paper. Other possible functions are minimum, average, and custom combining of weights. \textit{Note that the definition for degree centrality used is the same as the one used for the ground truth. Equation~\ref{eqn:sum-of-strengths} is used for identifying degree centrality nodes in both simple graphs and MLNs}.

Analysis on the ground truth graph is performed using existing algorithms available for weighted simple graphs.  In contrast, analysis output (and any additional carryover information appropriate for the heuristics being used) of the layers is passed on to the composition function which applies the chosen heuristics for generating the metric for MLNs.  

We first outline the two methods used for generating synthetic HoMLN datasets and also describe the real-world datasets used in the evaluation. 
Two general strategies were considered for generating synthetic homogeneous multilayer networks from base unweighted graphs:

\begin{itemize}
    \item Bottom-up: Combining two existing weighted graphs into a simple graph using a Boolean operator (AND or OR) and a weight aggregation function (one of sum, minimum, maximum, average.)
    \item Top-down: Start from a simple unweighted graph and partition it into layers as needed including adding weights randomly using a range. This method was used to generate all synthetic MLN datasets in this work from the base synthetic graph generated by Subgen. 
\end{itemize}

\subsection{Synthetic $\omega$-MLN Generation}

A top-down approach is used for this. The reason for this choice is that it affords much finer control in the generation of layers with diverse characteristics. For our purpose, a number of layer characteristics are important to make sure that the heuristics developed for composition works well across MLNs with varied layer characteristics. Subgen~\cite{subgen} is used to generate the base graph  from which layers are generated. Parameters used for layer generation include: edge distribution (e.g., 90-10, 70-30, and 50-50), weight range (e.g., randomly chosen in the interval [1, 10]), and edge overlap across layers (e.g., 0\%, 25\%, 50\% 75\%, and 100\%) to make sure a wide range of MLN characteristics are tested.

Our layer generation starts with a single unweighted base graph and generates a k-layer HoMLN by distributing the edges across layers using parameters indicated above  and given in a configuration file. The following are user defined parameters that control the generation of an MLN from the base graph:

\begin{itemize}
    \item \underline{Number of layers:} Specifies how many layers the base graph is to be split into. For this work, we have used two.
    \item \underline{Edge Distribution:} Determines the percentage of total edges allocated to each layer (e.g., 50-50, 70-30, 90-10). This is to simulate different graph densities among layers and explore their impact on the  performance of the heuristic.
    \item \underline{Aggregation Method:} Specifies the function to be used when computing the ground truth from the two layers (e.g. sum, max, min, average). This ensures consistency in the weights when the layers are later recombined.
    \item \underline{Weight Range:} Specifies the minimum and maximum edge weights to assign during generation. For weighted graphs, edge weights are randomly chosen from the specified range (min, max). In this paper, we use the range (1,10). For unweighted graphs, all edge weights are set to 1 by default.
    \item \underline{Random seed:} A random seed is used to ensures reproducibility of MLNs from the base graph. 
\end{itemize}

It is important to understand that once the layers are generated from the base graph, we still need to generate the aggregated/flattened ground truth graph \textbf{after} the generation of layers as the initial base graph does not have any weights.
For this work, each generated HoMLN consists of two layers with shared node sets but different edge distributions and weights. These layers are then aggregated using the selected aggregate function (e.g., sum, max, or min) and the Boolean operator (OR in our case) for conflating layer graphs into a ground truth graph for evaluation.

\subsubsection{Characteristics of synthetic datasets used} 
The datasets were generated from larger unweighted base graphs using the top-down approach. For each base graph, the total edge set is split into two layers using different edge distribution configurations (90-10,70-30, and 50-50), with edge overlap percentages of 0\%, 25\%, 50\%, and 75\% being applied. In table ~\ref{table:dataset_50-50}:

\begin{itemize}
    \item The \textbf{dataset} column indicates the dataset name.
    \item \textbf{Edges ($l_1$)} and \textbf{Edges ($l_2$)} specify the number of edges in layer 1 and layer 2, respectively.
    
    
    \item \textbf{Max Deg} gives maximum node degrees across both layers. The minimum degree for all nodes shown in the datasets are 0. 
    
    \item \textbf{Min Wt} and \textbf{Max Wt} indicate the range of edge weights assigned randomly within the specified interval.
    
    \item \textbf{\% Overlap} represents the percentage of edges shared across both layers.
    
    \item \textbf{\% Disjoint} indicates the percentage of edges that are unique to each layer. 
\end{itemize}

\begin{table}[!htbp]
\caption{Characteristics of Synthetic Graphs for a 50-50 Distribution Split Using Sum Aggregation}
\label{table:dataset_50-50}
\resizebox{\columnwidth}{!}{%
\begin{tabular}{| c | c | c | c | c | c | c |}
    \hline

\textbf{Dataset (with \# }   & \textbf{Nodes}           & \textbf{Edges} & \textbf{Edges } & \textbf{Max}  & \textbf{\% } & \textbf{\% }  \\ \textbf{of nodes and edges)}& & \textbf{($l_1$)}&\textbf{($l_2$)}&\textbf{Deg}&\textbf{Overlap}&\textbf{Disjoint} \\ \hline
\textbf{100KV2ME}            & \textbf{100000}  & 1000293          & 999707           & 2049         & 0                & 100                \\ 
\hline
                             &                  & 1249831          & 1250111          & 2560         & 24.997           & 75.003             \\ 
\hline
                             &                  & 1500107          & 1499570          & 3074         & 49.984           & 50.016             \\ 
\hline
                             &                  & 1749729          & 1749342          & 3569         & 74.954           & 25.046             \\ 
\hline
                             &                  & 2000000          & 2000000          & 4041         & 100              & 0                  \\ 
\hline
\textbf{100KV5ME}            & \textbf{100000}  & 2499458          & 2500542          & 4442         & 0                & 100                \\ 
\hline
                             &                  & 3123569          & 3125641          & 5502         & 24.984           & 75.016             \\ 
\hline
                             &                  & 3748904          & 3749458          & 6615         & 49.967           & 50.033             \\ 
\hline
                             &                  & 4373931          & 4373923          & 7667         & 74.957           & 25.043             \\ 
\hline
                             &                  & 5000000          & 5000000          & 8747         & 100              & 0                  \\ 
\hline
\textbf{200KV1ME}            & \textbf{200000 } & 500482           & 499518           & 716          & 0                & 100                \\ 
\hline
                             &                  & 625021           & 624979           & 910          & 25               & 75                 \\ 
\hline
                             &                  & 749857           & 750143           & 1086         & 50               & 50                 \\ 
\hline
                             &                  & 874897           & 874477           & 1256         & 74.937           & 25.063             \\ 
\hline
                             &                  & 1000000          & 1000000          & 1423         & 100              & 0                  \\ 
\hline
\textbf{200KV10ME}           & \textbf{200000 } & 4997753          & 5002247          & 6305         & 0                & 100                \\ 
\hline
                             &                  & 6248645          & 6251158          & 7835         & 24.998           & 75.002             \\ 
\hline
                             &                  & 7499737          & 7500168          & 9432         & 49.999           & 50.001             \\ 
\hline
                             &                  & 8749628          & 8748806          & 10955        & 74.984           & 25.016             \\ 
\hline
                             &                  & 10000000         & 10000000         & 12486        & 100              & 0                  \\ 
\hline
\textbf{1MV8ME}              & \textbf{1382908} & 4230583          & 4234730          & 5619         & 0                & 100                \\ 
\hline
                             &                  & 5289557          & 5291924          & 7038         & 24.998           & 75.002             \\ 
\hline
                             &                  & 6348208          & 6348440          & 8409         & 49.984           & 50.016             \\ 
\hline
                             &                  & 7406680          & 7406196          & 9820         & 74.983           & 25.017             \\ 
\hline
                             &                  & 8465313          & 8465313          & 11219        & 100              & 0                  \\
\hline
		\end{tabular}%
        }
		
\end{table}

\subsection{Real-world Datasets}

To complement the synthetic datasets, real-world homogeneous multilayer network (HoMLNs) have ben  used to validate the effectiveness of the proposed heuristics for computing weighted degree centrality. These datasets capture real interactions from two different domains and provide layered views of the same set of entities over time.

\begin{itemize}
    \item \textbf{Co-Authorship Network:} This dataset is derived from the Open Graph Benchmark’s ogbl-collab dataset~\cite{hu2020ogb}. This network represents a subset of collaborations between authors (Co-authorship) which was indexed by Microsoft Academic Graph (MAG). Each node represents an author, and an edge between two authors indicates a co-authored paper. Each edge is timestamped by year and weighted according to the number of joint publications for a particular year. For this paper, two layers of this dataset were extracted to form a two-layer HoMLN:
    \begin{itemize}
        \item Layer 1: Co-authorship's from the year 2004
        \item Layer 2: Co-authorship's from the year 2005
    \end{itemize}
    The nodes remain consistent across both layers, while the edge sets and weights differ depending on the collaboration activity for each year. The edge weights were retained as-is from the dataset to reflect the intensity of the interaction.
    
    Table~\ref{table:Collab_Dataset} shows the characteristics of the each layer for the Co-Authorship networks



\begin{table}[!ht]
    \caption{Structural properties of the Co-Authorship dataset across two layers (2004 and 2005)}
    \label{table:Collab_Dataset}
    \resizebox{\columnwidth}{!}{%
    
    \begin{tabular}{| c | c | c |}
        \hline
        \textbf{Property} & \textbf{Co-authorship } & \textbf{Co-authorship } \\ 
         &\textbf{2004} & \textbf{2005} \\
        \hline
        Number of Nodes & 235{,}868 & 235{,}868 \\
        \hline
        Number of Edges & 32{,}880 & 42{,}257 \\
        \hline
        Network Density & \(1.18 \times 10^{-6}\) & \(1.52 \times 10^{-6}\) \\
        \hline
        Number of  & 222{,}422 & 219{,}924 \\
        Connected Components & & \\
        \hline
        Minimum node degree & 0 & 0 \\
        \hline
        Maximum node degree & 47 & 56 \\
        \hline
        Minimum edge weight & 1 & 1 \\
        \hline
        Maximum edge weight & 7 & 7 \\
        \hline
        Overlapping edges & 6.58\% & 6.58\% \\
        \hline
        Disjoint edges & 93.42\% & 93.42\% \\
        \hline
    \end{tabular}
    }

\end{table}


    \item \textbf{Ant Interaction Network:} This dataset is drawn from a curated repository of animal social networks that compiles over 1,000 networks from diverse species~\cite{collier_2024_14168300}. One of the networks from this repository captures interactions in an ant colony over multiple days. Each node represents an individual ant, and edges reflect observed interactions between the ants. The edges are defined by spatial proximity, meaning a pair of ants is said to have interacted when the front end of one ant enters the trapezoidal proximity zone of another ant. For this paper, two observations days were selected:
    \begin{itemize}
        \item Layer 1: Interactions observed on Day 1
        \item Layer 2: Interactions observed on Day 3
    \end{itemize}
    \end{itemize}
    Edge weights indicate the frequency of interactions during the respective day. Since the colony is the same, the node set remains the same across both layers.

    Table \ref{table:Ants_Dataset} shows the characteristics of the each layer for the Ant Interaction networks




\begin{table}[!ht]
    \caption{Structural properties of the Ants interaction dataset across two layers (Days 1 and 3)}
    \label{table:Ants_Dataset}
    \resizebox{\columnwidth}{!}{%
    \begin{tabular}{| c | c | c |}
        \hline
        \textbf{Property} & \textbf{Ants Day 01} & \textbf{Ants Day 03} \\
        \hline
        Number of nodes & 164 & 164 \\
        \hline
        Number of edges & 10{,}731 & 10{,}090 \\
        \hline
        Network density & 0.8029 & 0.7549 \\
        \hline
        Number of connected components & 1 & 1 \\
        \hline
        Minimum node degree & 41 & 20 \\
        \hline
        Maximum node degree & 160 & 158 \\
        \hline
        Minimum edge weight & 1 & 1 \\
        \hline
        Maximum edge weight & 229 & 108 \\
        \hline
        Overlapping edges & 74.92\% & 74.92\% \\
        \hline
        Disjoint edges & 25.07\% & 25.07\% \\
        \hline
    \end{tabular}}
    
\end{table}

We present two aggregation types (sum and max) for generating ground truth graph and for both we adopt a decoupling-based framework consisting of (i) a layer-based analysis function and (ii) a heuristic-based composition function. Degree centrality definition remains the same as shown in Equation~\ref{eqn:sum-of-strengths}.
\section{Experimentation Results for Max composition}
We present experimental results using the data sets described in section \ref{section:datasets-and-computation} for a family of algorithms that use the max heuristics described in section \ref{section:max} for the composition function.
Accuracy is measured using Jaccard similarity against the ground truth hub set, while runtime captures the cost of analysis, composition, and ground truth aggregation.

\begin{figure}
    \centering
    \includegraphics[width=\columnwidth]{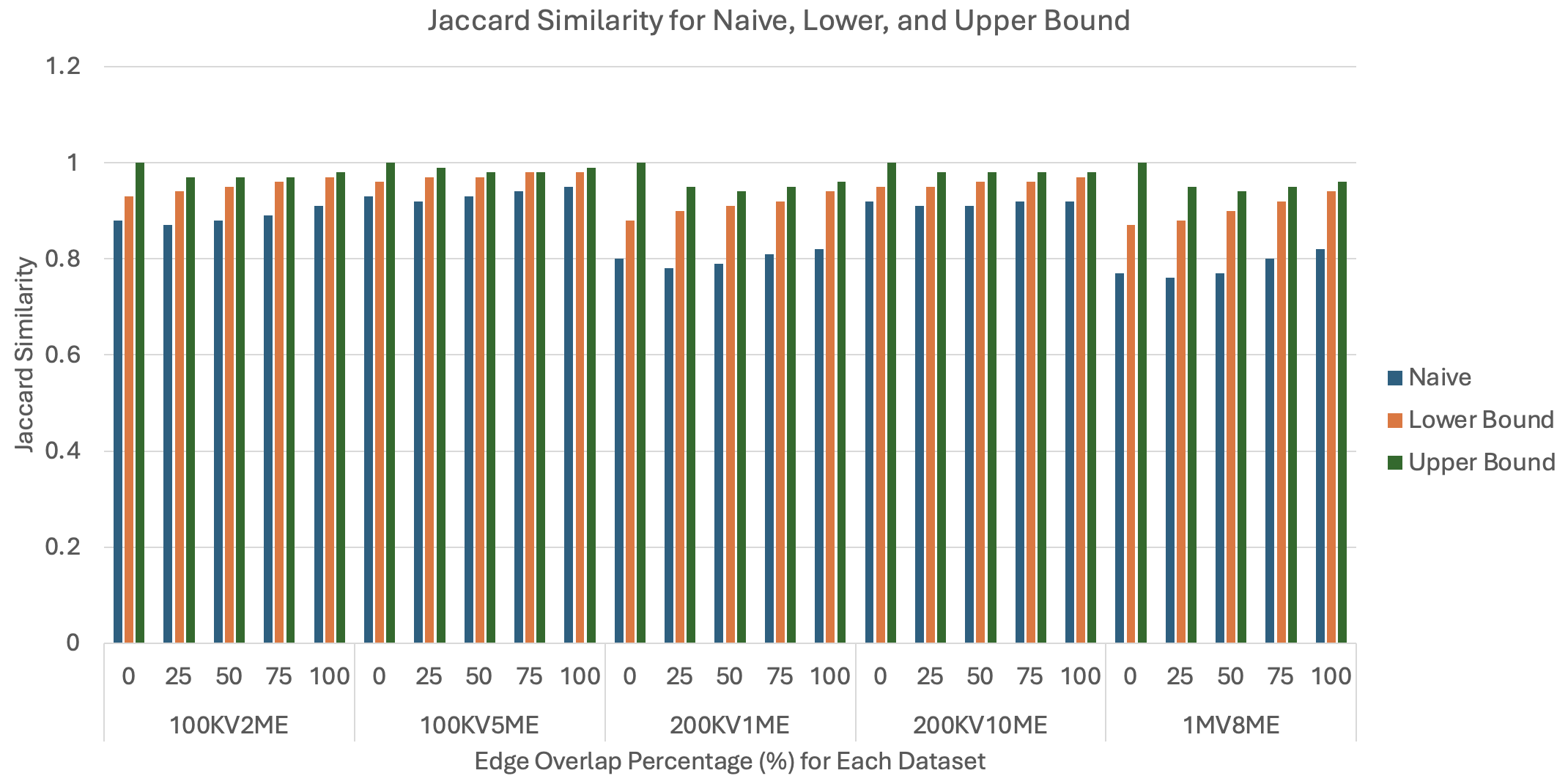}
    \caption{Accuracy Comparison Using Jaccard Similarity for Naive, Lower Bound and Upper Bound Strategies}
    \label{fig:max_jaccard-syn}
\end{figure}

\underline{Accuracy and Hub Identification:}
Figure \ref{fig:max_jaccard-syn} presents the Jaccard similarity achieved by the Naïve, Lower Bound (LB), and Upper Bound (UB) heuristic across the synthetic datasets and overlap configurations. Across the datasets, the UB heuristic consistently achieves the highest Jaccard similarity, followed by the LB heuristic and then the Naïve approach. This behavior follows the construction of the bounds. The UB heuristic estimates the composed strength of a node as $(s_x(u) + s_y(u)$, while the LB heuristic estimates it as $(\max(s_x(u), s_y(u)))$. Since the ground-truth strength under max aggregation is bounded by these, the UB is less likely to exclude nodes that may become hubs after composition. In contrast, the LB heuristic is more conservative and may underestimate nodes whose aggregate importance arises from contributions distributed across the layers. Consequently, the LB heuristic improves upon the Naïve strategy but generally remains less accurate than the UB heuristic. 

The Naïve strategy consistently exhibits the lowest Jaccard similarity because it does not estimate aggregate node strength. Instead, it relies solely on the layer-wise hub membership to infer hub status in the aggregated graph. However, as mentioned by Lemma \ref{lemma:max-lemma-b}, hub membership under max aggregation cannot, in general, be determined solely from layer-wise hub membership. Nodes that are not classified as hubs in either layers may still emerge as hubs in the aggregated graph due to the interaction between edge weights and overlap patterns. As a result, the Naïve strategy frequently excludes relevant node candidates and achieves lower similarity with the ground-truth than both bound-based heuristics. 

\begin{table}
\centering
\caption{Layer Density of the Synthetic Datasets for a 50-50 Edge Distribution Split}
\label{table:dataset_50-50_density}
\resizebox{\columnwidth}{!}{
\begin{tabular}{|c|c|c|c|c|} 
\hline
\textbf{Dataset (with \# }   & \textbf{Nodes}   & \textbf{\% }     & \textbf{Density} & \textbf{Density}  \\
\textbf{of nodes and edges)} &                  & \textbf{Overlap} & \textbf{($l_1$)} & \textbf{($l_2$)}  \\ 
\hline
\textbf{100KV2ME}            & \textbf{100000}  & 0                & 0.00020006       & 0.00019994        \\ 
\hline
                             &                  & 24.997           & 0.00024997       & 0.00025002        \\ 
\hline
                             &                  & 49.984           & 0.00030002       & 0.00029992        \\ 
\hline
                             &                  & 74.954           & 0.00034995       & 0.00034987        \\ 
\hline
                             &                  & 100              & 0.0004           & 0.0004            \\ 
\hline
\textbf{100KV5ME}            & \textbf{100000}  & 0                & 0.00049990       & 0.00050011        \\ 
\hline
                             &                  & 24.984           & 0.00062472       & 0.00062513        \\ 
\hline
                             &                  & 49.967           & 0.00074979       & 0.00074990        \\ 
\hline
                             &                  & 74.957           & 0.00087479       & 0.00087479        \\ 
\hline
                             &                  & 100              & 0.00100001       & 0.00100001        \\ 
\hline
\textbf{200KV1ME}            & \textbf{200000 } & 0                & 0.00002502       & 0.00002498        \\ 
\hline
                             &                  & 25               & 0.00003125       & 0.00003125        \\ 
\hline
                             &                  & 50               & 0.00003749       & 0.00003751        \\ 
\hline
                             &                  & 74.937           & 0.00004375       & 0.00004372        \\ 
\hline
                             &                  & 100              & 0.00005          & 0.00005           \\ 
\hline
\textbf{200KV10ME}           & \textbf{200000 } & 0                & 0.00024989       & 0.00025011        \\ 
\hline
                             &                  & 24.998           & 0.00031243       & 0.00031256        \\ 
\hline
                             &                  & 49.999           & 0.00037499       & 0.00037501        \\ 
\hline
                             &                  & 74.984           & 0.00043748       & 0.00043744        \\ 
\hline
                             &                  & 100              & 0.0005           & 0.0005            \\ 
\hline
\textbf{1MV8ME}              & \textbf{1382908} & 0                & 0.00000442       & 0.00000443        \\ 
\hline
                             &                  & 24.998           & 0.00000553       & 0.00000553        \\ 
\hline
                             &                  & 49.984           & 0.00000664       & 0.00000664        \\ 
\hline
                             &                  & 74.983           & 0.00000775       & 0.00000775        \\ 
\hline
                             &                  & 100              & 0.00000885       & 0.00000885        \\
\hline
\end{tabular}
}
\end{table}

Figure \ref{fig:max_jaccard-syn} also suggests that graph density exerts an influence on accuracy. The synthetic datasets span nearly three orders of magnitude in density, ranging from $4.42\times 10^{-6}$ for 1MV8ME to $1.00\times 10^{-3}$ for 100KV5ME. The sparsest datasets, 200KV1ME and 1MV8ME, consistently exhibit lower Jaccard similarity than the denser datasets, 100KV5ME and 200KV10ME. This can be explained by the effect of density on node strength. In sparse graphs, node strength is determined by relatively few edges, making hub rankings more sensitive to approximation errors. In contrast, denser graphs distribute strength across many edges, which reduces the influence of individual edge-level errors and producing more stable hub rankings. As a result, the heuristics achieve a higher Jaccard similarity score on denser datasets. The effect of density is particularly evident for the Naïve strategy, where the performance gap relative to the UB increases substantially on sparse datasets. This observation suggests that choice of heuristic is not the primary source of accuracy loss, but the exclusion of strength information. Both UB and LB preserve quantitative layer-wise strength information, whereas the Naïve approach relies exclusively on hub membership. Overall, the results indicate that preserving strength information during composition is important when identifying hubs under max aggregation. 

\begin{figure}
    \centering
    \includegraphics[width=\columnwidth]{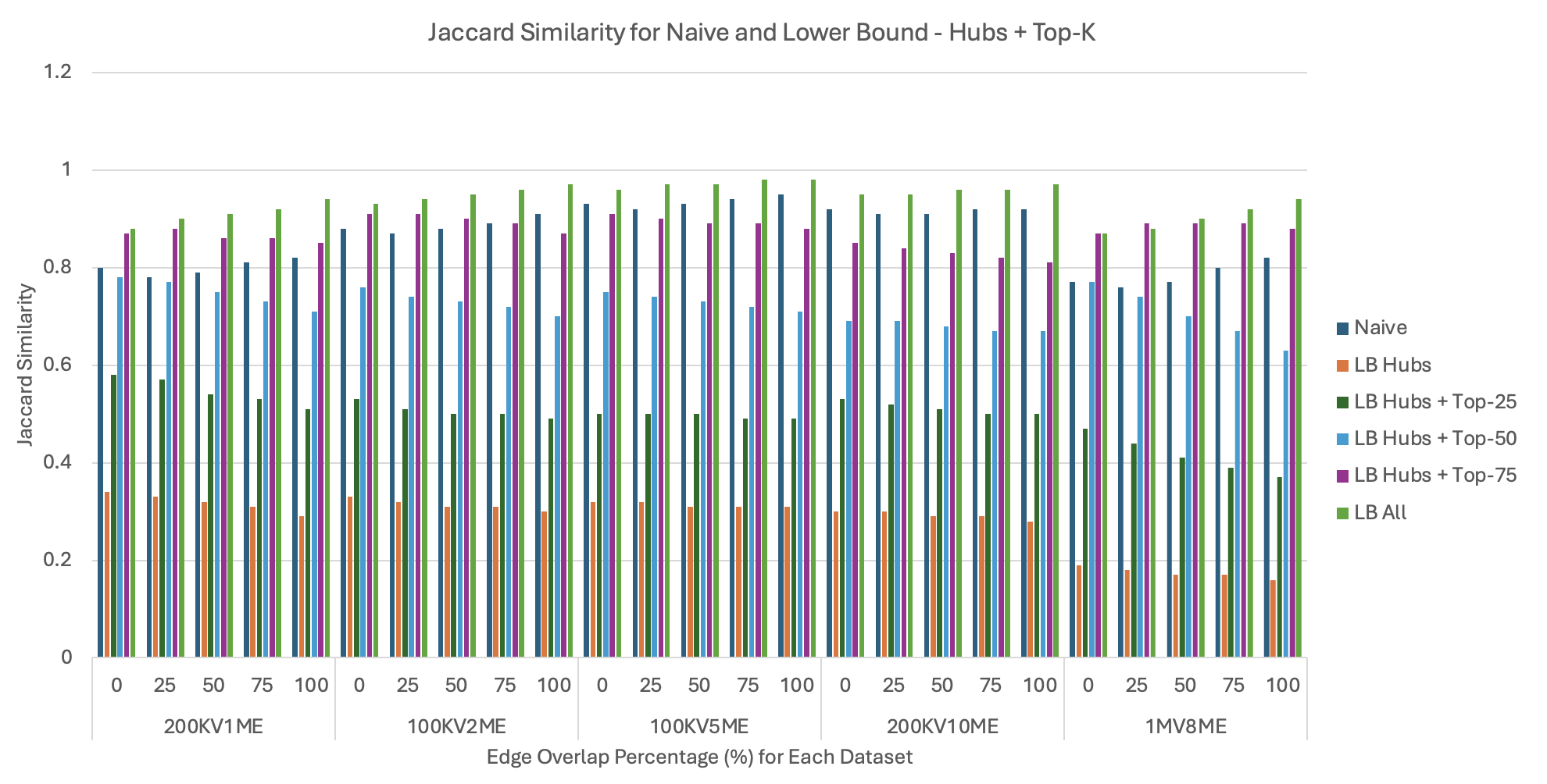}
    \caption{Jaccard similarity for LB Hubs and Hubs + Top-$k$ across datasets and overlap levels}
    \label{fig:max_jaccard-syn-lb-hubs-topk}
\end{figure}

\begin{figure}
    \centering
    \includegraphics[width=\columnwidth]{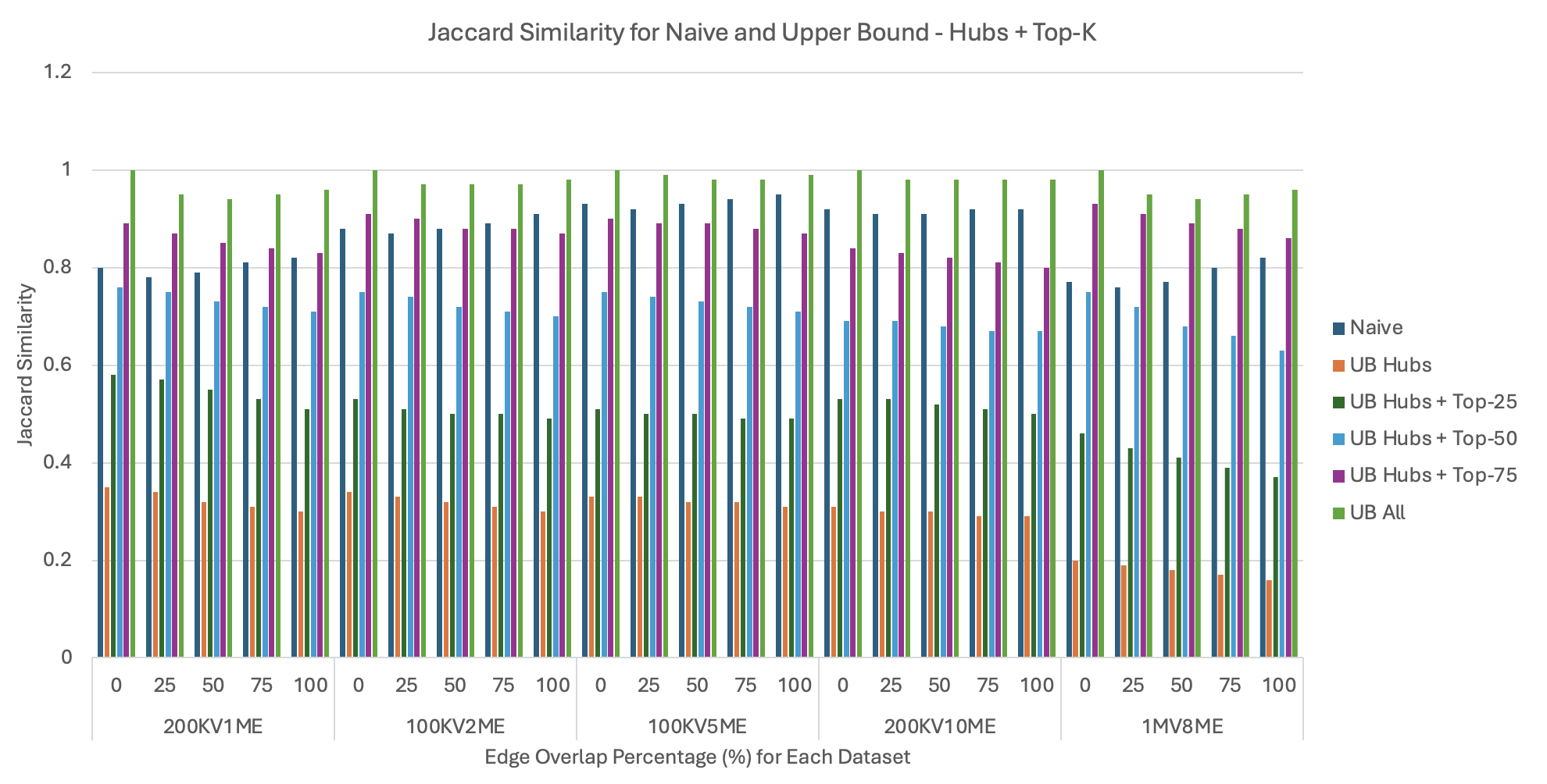}
    \caption{Jaccard similarity for UB Hubs and Hubs + Top-$k$ across datasets and overlap levels}
    \label{fig:max_jaccard-syn-ub-hubs-topk}
\end{figure}

The effect of hub-candidate selection is shown in Figures \ref{fig:max_jaccard-syn-lb-hubs-topk} and \ref{fig:max_jaccard-syn-ub-hubs-topk}. Across all datasets and overlap configurations, the Hubs-Only strategy consistently achieves the lowest Jaccard similarity, often performing substantially worse than the Naïve baseline. This results indicates that restricting composition exclusively to layer-wise hub nodes is overly selective under max aggregation and frequently excludes nodes that become hubs in the aggregated graph. This observation is consistent with Lemma \ref{lemma:max-lemma-b}, which establishes that hub membership under max aggregation cannot, in general, be inferred solely from layer-wise hub membership. 

Expanding the candidate set through Top-$K$ strategies produces a substantial improvement in accuracy for both LB and UB formulations. The dramatic increase in Jaccard similarity when moving from Hubs-Only to Top-$k$ strategies suggests that the primary source of error is not the bound approximation itself, but instead the premature elimination of candidate nodes. By adding a relatively small set of high-strength nodes beyond the strict hub set, the Top-$k$ strategy is able to recover the more of the ground-truth hubs that would have been otherwise excluded.

For the LB formulation, accuracy generally increases as the candidate set expands. However, a noticeable gap between the Top-$k$ strategies and the All strategy remains across most datasets. This behavior reflects the conservative nature of the LB approximation, which can still exclude relevant candidates even after set expansion. In several configurations, Naïve performs better than the Top-$k$ as candidate filtering can remove nodes from composition even though they may become important in the aggregated graph. 

A similar trend can be observed for the UB formulation. However, the difference between the Top-$k$ strategies and the All strategy are considerably smaller since the UB heuristic preserves a larger set of potential hub candidates through its additive strength estimate. Overall, Figures \ref{fig:max_jaccard-syn-lb-hubs-topk} and \ref{fig:max_jaccard-syn-ub-hubs-topk} indicate that candidate-set selection has an influence on accuracy for Top-$k$ strategy. While UB consistently achieves the highest accuracy, the largest performance differences arise from how aggressively nodes are filtered before composition. Preserving a sufficiently large candidate set appears to be more important for accuracy hub recovery. 

\begin{figure}
    \includegraphics[width=\columnwidth]{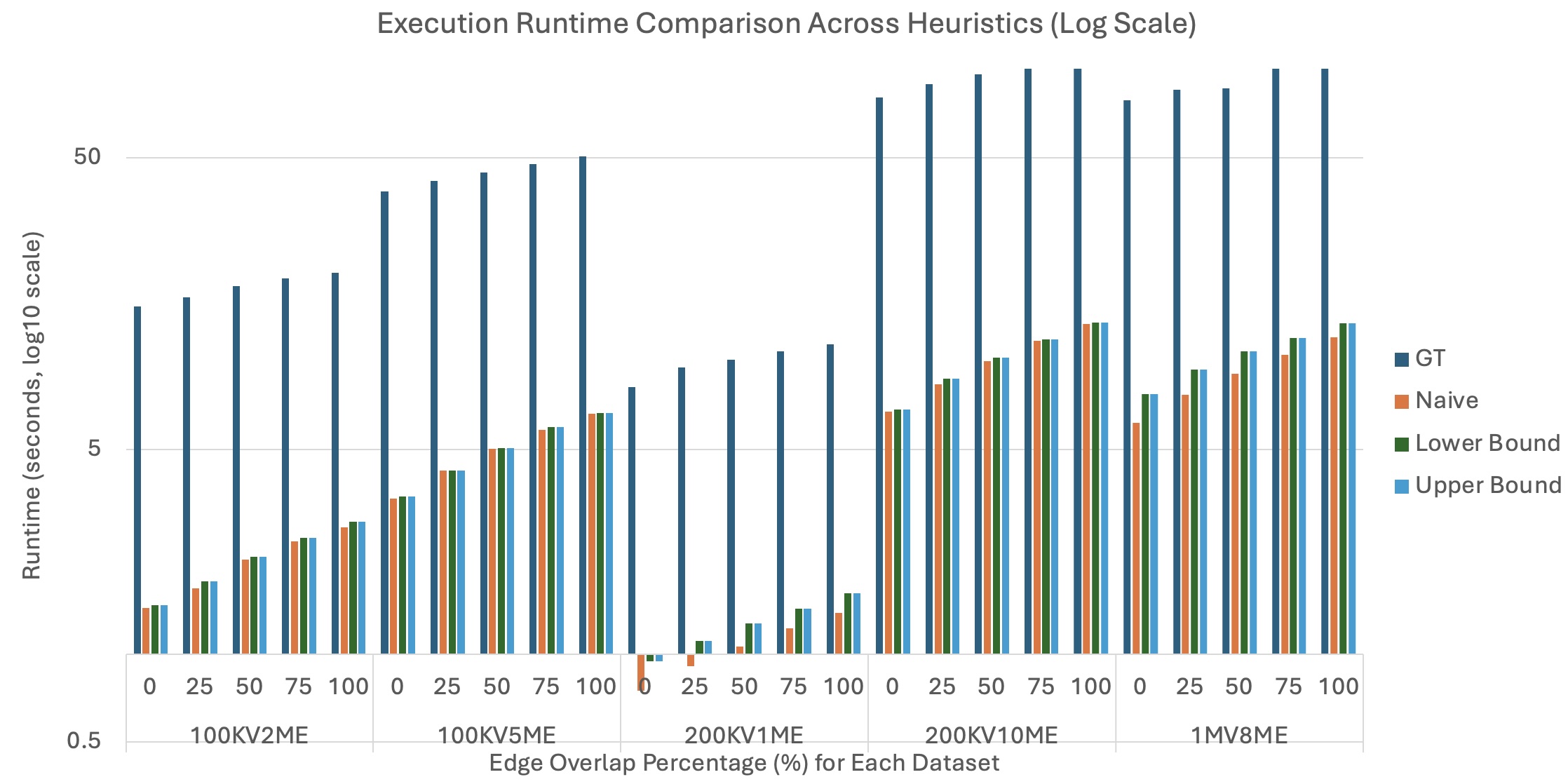}
    \caption{Comparison of Execution Time of the Strategies against Execution Time of Ground Truth for Synthetic Data Sets}
    \label{fig:max_total-time}
\end{figure}

\begin{figure}
    \includegraphics[width=\columnwidth]{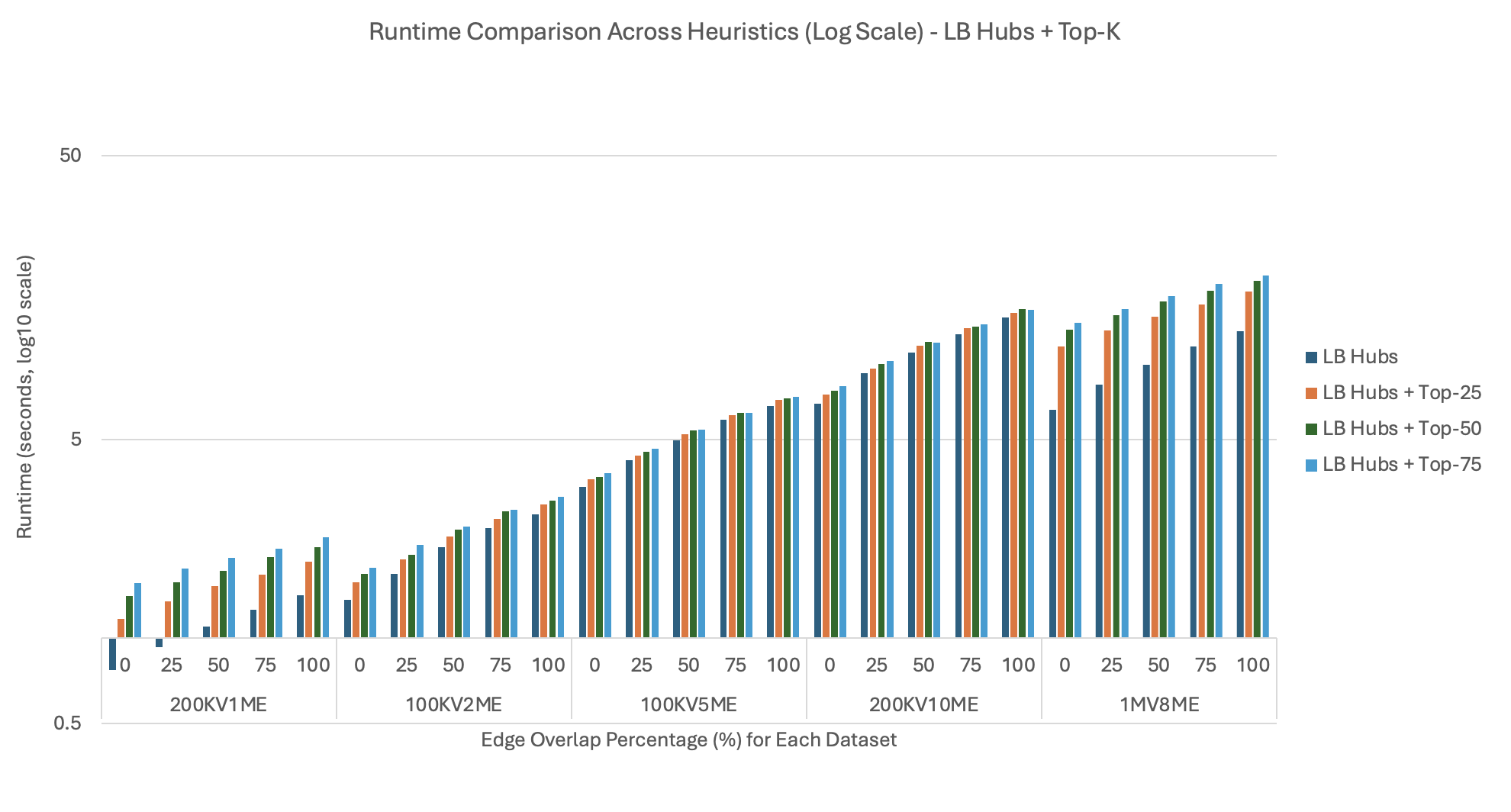}
    \caption{Runtime (log scale) for LB Hubs and Hubs + Top-$k$ across datasets and overlap levels}
    \label{fig:max_total-time-lb-hubs-topk}
\end{figure}

\begin{figure}
    \includegraphics[width=\columnwidth]{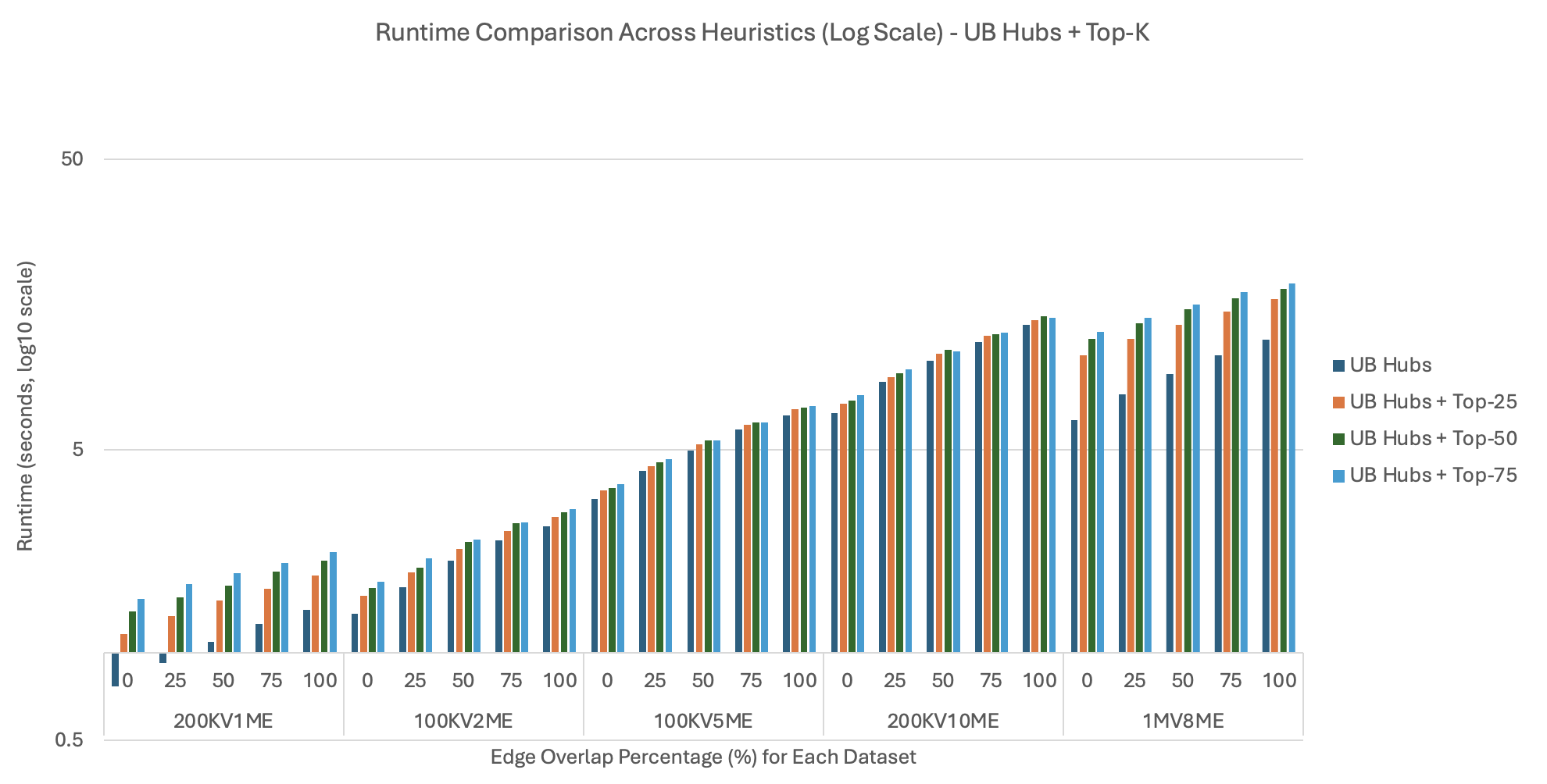}
    \caption{Runtime (log scale) for UB Hubs and Hubs + Top-$k$ across datasets and overlap levels}
    \label{fig:max_total-time-ub-hubs-topk}
\end{figure}
\underline{Runtime Performance:} 
Runtime results for the synthetic datasets are shown in Figures \ref{fig:max_total-time}, \ref{fig:max_total-time-lb-hubs-topk}, and \ref{fig:max_total-time-ub-hubs-topk}. All runtime figures are presented on a logarithmic scale to facilitate comparison across datasets with execution times that span multiple orders of magnitude. Figure \ref{fig:max_total-time} compares the ground-truth, Naïve, LB, and UB strategies, while Figures \ref{fig:max_total-time-lb-hubs-topk} and \ref{fig:max_total-time-ub-hubs-topk} illustrate the effects of hub-candidate selection on runtime.

Across all datasets and overlap configurations, ground-truth consistently incurs the highest runtime. This behavior is expected, as ground-truth requires explicit construction of the Boolean-OR aggregated graph and resolution of overlapping edges by selecting maximum weights, an edge-level process that becomes increasingly expensive for larger graphs and higher overlap levels. In contrast, the Naïve, LB, and UB heuristics operate directly on layer-wise strength values and avoid costly edge-level aggregation. As a result, all heuristic approaches achieve substantial runtime reductions relative to ground-truth. The LB and UB formulations exhibit nearly identical runtime behavior, reflecting the fact that both perform similar per node during composition.

Figure \ref{fig:max_total-time} also shows that overlap has a much larger effect on the ground-truth than on the heuristics. As overlap increases, ground-truth must process a greater number of shared edges when constructing the aggregated graph, resulting in increased execution time. The heuristic approaches remain comparatively insensitive to overlap, as they rely on precomputed layer-wise strengths rather than overlap-aware edge processing.

The effect of candidate-set selection is shown in Figures \ref{fig:max_total-time-lb-hubs-topk} and \ref{fig:max_total-time-ub-hubs-topk}. For both LB and UB, the Hubs-Only strategy consistently achieves the lowest runtime due to the small number of nodes considered during composition. Runtime increases as the candidate set expands from Top-$25$ to Top-$75$, reflecting both the additional cost of evaluating larger candidate sets during composition and the overhead associated with ranking and sorting nodes to identify the Top-$k$ candidates during analysis. Consequently, total runtime is not only influenced by the composition function but also by the analysis and node candidate selection stages that precede the composition. The increase in runtime is most pronounced when moving from Hubs-Only to Top-$25$, reflecting the additional analysis and sorting required to construct the candidate set. Subsequent expansion from Top-$25$ to Top-$75$ results in a more gradual increase in runtime, suggesting that the overhead associated with candidate selection remains modest relative to the edge-level aggregation costs incurred by the ground-truth approach.

Graph characteristics also influence runtime. Larger datasets, particularly 200KV10ME and 1MV8ME, exhibit the highest run-times across all strategies due to their substantially larger node and edge counts. While density contributes to computational cost, overall runtime is more strongly influenced by graph size. This is evident in the relatively sparse 1MV8ME dataset, which still incurs substantial runtime due to its large number of nodes and edges. 

\begin{figure}
    \includegraphics[width=\columnwidth]{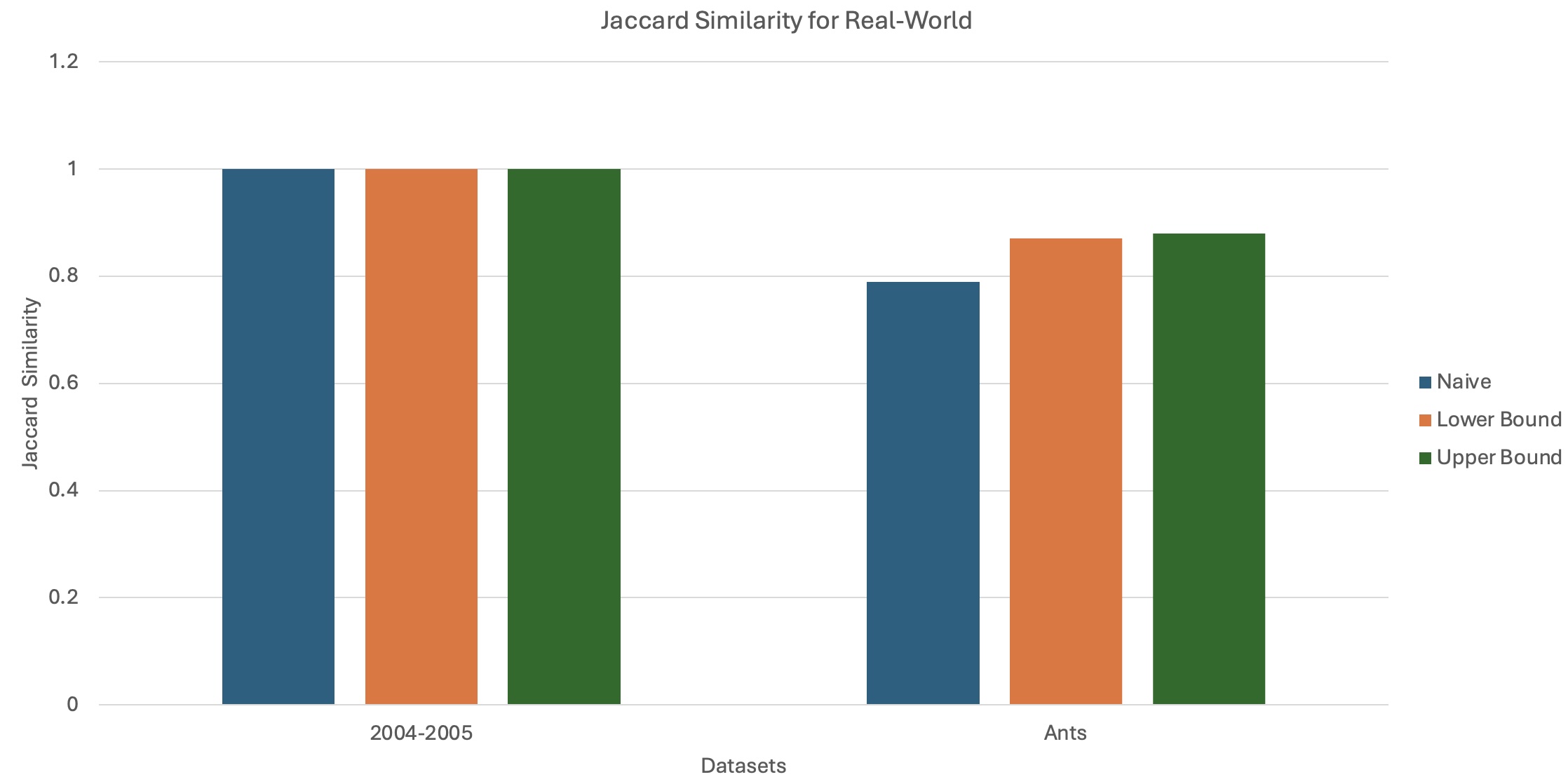}
    \caption{Accuracy Comparison Using Jaccard Similarity on Real-World Datasets}
    \label{fig:max_jaccard_rw}
\end{figure}

\underline{Real-World Datasets:}
Figure \ref{fig:max_jaccard_rw} presents the accuracy of the heuristics on the real-world datasets. Overall, the results are consistent with the trends observed on the synthetic datasets, with the UB heuristic achieving the highest Jaccard similarity, followed by LB and Naïve. 

For the co-authorship dataset (2004 - 2005), all strategies achieve near-perfect Jaccard similarity. This indicates that the dominant hubs are strongly preserved across layers and remain highly distinguishable after aggregation. Under such conditions, even approximate strength estimates are sufficient to recover the ground-truth hub set, resulting in minimal differences between the heuristics. In contrast, the Ants datasets exhibits greater sensitivity to the choice of heuristic. The Naïve strategy achieves the lowest accuracy, while both bound-based heuristics provide improved hub identification. The superior performance of UB suggests that preserving strength contributions from both layers is beneficial for identifying dominant nodes under max aggregation. Compared with the co-authorship network, the Ants dataset appears to be more sensitive to small changes in node strength, resulting in larger differences between the heuristics.   

Overall, the real-world results reinforce the observations from the synthetic datasets. Heuristics that preserve layer-wise strength information consistently outperform approaches based solely on hub membership, with the UB formulation providing the closest approximation to the ground-truth hub set. 

\underline{Runtime Performance:} 
Runtime results for the real-world datasets are summarized in Table \ref{table:max_rw_runtime}. Similar to the synthetic experiments, ground-truth incurs the highest runtime due to the cost of explicit overlap-aware graph aggregation. In contrast, the heuristic approaches avoid edge-level aggregation and operate directly on layer-wise strength values, resulting in substantially lower execution times. 

On the co-authorship dataset, both LB and UB reduce total runtime by more than a factor of three relative to ground truth while maintaining near-perfect accuracy. The Naïve strategy achieves the lowest runtime overall because it performs the least amount of composition work. On the Ants dataset, run-times are uniformly low due to the small graph size, however, ground-truth remains the most expensive approach, while the heuristic strategies complete nearly instantaneously. These results demonstrate that the computational advantages observed on the synthetic datasets also hold for the real-world networks. By eliminating the need for explicit overlap-aware aggregation, the proposed heuristics provide substantial runtime reductions while maintaining high accuracy in recovering the ground-truth hub set. 

\begin{table}[!htp]
    \caption{Total runtime (in seconds) for real-world datasets}
    \label{table:max_rw_runtime}
        \begin{tabular}{| c | c | c | c | c |} \hline
            \textbf{Dataset} & \textbf{GT} & \textbf{Lower Bound} & \textbf{Upper Bound} & \textbf{Naïve}   \\ \hline
            2004-2005        & 1.575       & 0.439                & 0.439       & 0.204             \\ \hline
            Ants             & 0.057       & 0.036                & 0.036       & 0.034           \\ \hline
        \end{tabular}

\end{table}
\section{Weighted Degree Centrality Using Sum and Boolean-OR Aggregation}
\label{section:sum}
In contrast to max aggregation, sum aggregation yields an additive form for the ground truth strength, which makes decoupled composition exact using only layer-wise strengths. As before, we use the same two-function decoupling approach: (i) an analysis function that processes each layer independently, and (ii) a composition function that combines the layer-wise results using heuristics. Under sum aggregation, this modularization not only improves efficiency and scalability, but also enables controlled composition using partial layer-wise information. 

\subsection{Definitions}
Let the two layers of an HoMLN be denoted as $Layer_x$ and $Layer_y$, with a shared node set $V$. For any node $u \in V$, the analysis function computes its layer-wise strength (weighted degree) by summing the weights on edges incident on a node. 

\begin{equation*}
    s_x(u) = \sum_{(u, v) \in E_x} w_x(u, v), \quad
    s_y(u) = \sum_{(u, v) \in E_y} w_y(u, v)
\end{equation*}

\underline{Ground Truth Using Boolean-OR and Sum:} The ground truth graph $G_{GT}$ is formed using \textit{Boolean-OR} on the edge structure (if the edge is present in at least one layer) and \textit{sum} aggregation for the weights on those edges. Using this, the ground truth strength of node $u$ is: 
\begin{equation*}
    s_{GT}(u) = s_x(u)+s_y(u)
\end{equation*}

The corresponding global average strength: 
\begin{equation*}
    \bar{s}_{GT} = \frac{1}{|V|} \sum_{u \in V} s_{GT}(u) 
         = \frac{1}{|V|} \sum_{u \in V}(s_x(u)+s_y(u))
\end{equation*}

A node $u$ is considered a ground truth hub if $s_{GT}(u) > \bar{s}_{GT}$.

\underline{Strength Estimate Under the Decoupling Approach:} Since the ground truth strength is the sum of the layer-wise strengths, the aggregated strength of a node can be directly computed during composition by adding its layer strengths, without having to explicitly construct the aggregated graph. 

\begin{equation*}
    \hat{s}(u) = s_x(u) + s_y(u)
\end{equation*}

Additionally, if the analysis function carries forward the layer strength sums $\text{sum}_x = \sum_{u \in V}s_x(u)$, \quad $\text{sum}_y = \sum_{u \in V}s_y(u)$, then the global average can be computed as:

\begin{equation*}
    \bar{s} =\frac{\text{sum}_{x} + \text{sum}_y}{|V|}
\end{equation*}

which allows for the correct threshold calculation during composition. 

\subsection{Lemmas and Intuition}
Since sum aggregation yields a simple form for $s_{GT}(u)$, it also produces characteristics when heuristic composition is restricted to a small candidate set. 

\begin{lemma}
\label{lemma:sum-lemma-a}
A node that is a hub in only one layer of a weighted homogeneous multilayer network (HoMLN) may not remain a hub in the ground truth network constructed using Boolean-OR aggregation with sum aggregation.

\end{lemma}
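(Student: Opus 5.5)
This is a "may not" statement, so it suffices to exhibit one $\omega$-HoMLN in which a node that is a hub in exactly one layer fails to be a hub in $G_{GT}$. We will also explain structurally why this can happen.

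\textbf{Structural reason.} Under sum aggregation we have the exact identities $s_{GT}(u)=s_x(u)+s_y(u)$ and $\bar{s}_{GT}=\bar{s}_x+\bar{s}_y$. The second follows by averaging the first over $V$. Hence
\begin{equation*}
s_{GT}(u)-\bar{s}_{GT}=\bigl(s_x(u)-\bar{s}_x\bigr)+\bigl(s_y(u)-\bar{s}_y\bigr).
\end{equation*}
If $u$ is a hub only in $Layer_x$, the first bracket is positive and the second is non-positive. The sign of the total is then undetermined. Whenever $\bar{s}_y-s_y(u)\ge s_x(u)-\bar{s}_x$, node $u$ is not a GT hub. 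So it suffices to make $u$'s deficit in $Layer_y$ at least as large as its surplus in $Layer_x$.

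\textbf{Construction.} We take a small HoMLN on $V=\{A,B,C\}$ with weights in $\mathbb{R}_{\ge 0}$.
\begin{itemize}
\item In $Layer_x$, use a single edge $(A,B)$ of weight $2$. This gives strengths $(2,2,0)$ with average $4/3$, so $A$ is a hub in $Layer_x$.
\item In $Layer_y$, use a single edge $(B,C)$ of weight $10$. This gives strengths $(0,10,10)$ with average $20/3$, so $A$ is not a hub in $Layer_y$.
\end{itemize}
The Boolean-OR/sum GT graph has edges $(A,B)$ of weight $2$ and $(B,C)$ of weight $10$. The GT strengths are $(2,12,10)$ with average $8$. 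Since $s_{GT}(A)=2<8$, node $A$ is not a GT hub, which proves the lemma.

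\textbf{Main check.} The only point needing care is that the ground truth strength really equals $s_x(u)+s_y(u)$, including on overlapping edges. This holds because the GT weight of an overlapping edge is $w_x+w_y$, and each edge weight is counted once per incident endpoint. We avoid overlapping edges in the example, so the verification is immediate. We may also remark that choosing a single layer with heavier weights, or a denser layer, is what drives the phenomenon, mirroring the role of $\bar{s}_{GT}$ in Lemma~\ref{lemma:max-lemma-c}.
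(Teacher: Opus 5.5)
Your proof is correct and takes the same route as the paper: you use $s_{GT}(u)=s_x(u)+s_y(u)$ and $\bar{s}_{GT}=\bar{s}_x+\bar{s}_y$ to reduce non-hub status in the ground truth to the condition $s_x(u)-\bar{s}_x \le \bar{s}_y - s_y(u)$. Your explicit three-node instance, with $s_{GT}(A)=2<8=\bar{s}_{GT}$, strengthens the argument, since the paper only asserts that this condition can be met and never exhibits an instance where it is.
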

\begin{proof}
Let the two layers be $Layer_x$ and $Layer_y$ with common node set $V$, and let $s_{GT}(u) = s_x(u) + s_y(u)$ denote  the strength of node $u$ in the Boolean-OR  ground truth with sum aggregation. Let $\bar{s}_{GT}$, $\bar{s}_{x}$, $\bar{s}_{y}$ denote the corresponding average strengths.

Assume a node $u$ is a hub in exactly one layer, for example $Layer_x$, so that $s_x(u) > \bar{s}_x$ and $s_y(u) \le \bar{s}_y$.

Node $u$ will fail to be a hub in the ground truth whenever

\begin{equation}
\label{eqn:lemma-a_cond1}
    s_x(u) + s_y(u) \le \bar{s}_{GT}
\end{equation}

Since $\bar{s}_{GT} = \bar{s}_x + \bar{s}_y$, after substituting and rearranging, equation \ref{eqn:lemma-a_cond1} is equivalent to 
\begin{equation}
\label{eqn:lemma-a_cond2}
    s_x(u) - \bar{s}_x \le \bar{s}_y - s_y(u)
\end{equation}
So, even if $u$ exceeds the average in one layer by a small margin, it can still fall below the ground truth threshold if its strength in the other layer is sufficiently below the average. This situation can occur in weighted HoMLNs because layer-wise strength distributions may differ and a node can have weak or zero contributions in one layer. Consequently, a node being a hub in only one layer does not guarantee that it remains a hub after sum aggregation.  
\end{proof}

\begin{lemma}
\label{lemma:sum-lemma-b}
A node that is not a hub in one of the layers of a weighted homogeneous multilayer network (HoMLN) may become a hub in the ground truth network constructed using Boolean-OR aggregation with sum aggregation for the weights.
\end{lemma}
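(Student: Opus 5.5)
The plan mirrors the proof of Lemma~\ref{lemma:sum-lemma-a}: reduce the hub condition in the ground truth to a comparison of layer-wise deviations from the layer averages, then exhibit a small configuration where that comparison goes the "wrong" way relative to the layer hub status.

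Let $Layer_x$ and $Layer_y$ share the node set $V$. Under Boolean-OR with sum aggregation, $s_{GT}(u)=s_x(u)+s_y(u)$. Averaging over $V$ gives $\bar{s}_{GT}=\bar{s}_x+\bar{s}_y$. Suppose $u$ is not a hub in $Layer_y$, so $s_y(u)\le \bar{s}_y$. Then $u$ is a ground-truth hub exactly when $s_x(u)+s_y(u)>\bar{s}_x+\bar{s}_y$. Rearranging, this is equivalent to
\begin{equation*}
    s_x(u)-\bar{s}_x > \bar{s}_y - s_y(u) \ge 0 .
\end{equation*}
So the lemma holds as soon as the surplus of $u$ above the average in $Layer_x$ strictly exceeds its deficit below the average in $Layer_y$. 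Nothing in the definition of a weighted HoMLN forbids this, since the two layers have independent edge sets and weights.

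To make the statement concrete rather than merely plausible, I would give an explicit small two-layer instance. Take nodes $\{A,B,C\}$.
\begin{itemize}
    \item In $Layer_x$, give $A$ heavy incident edges, e.g.\ $w_x(A,B)=w_x(A,C)=10$. Then $s_x(A)=20$, well above $\bar{s}_x$.
    \item In $Layer_y$, make $A$ slightly below average, e.g.\ $w_y(A,B)=1$ and $w_y(B,C)=2$. Then $s_y(A)=1$, $s_y(B)=3$, $s_y(C)=2$, so $\bar{s}_y=2$ and $A$ is not a hub in $Layer_y$.
\end{itemize}
Computing $s_{GT}$ and $\bar{s}_{GT}$ then shows $A$ is a ground-truth hub. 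A variant also covers the stronger reading in which $u$ is a non-hub in one layer while another node is a hub there.

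The main obstacle is interpretive rather than technical: deciding whether ``not a hub in one of the layers'' should be read as ``a non-hub in exactly one layer'' or as ``a non-hub in every layer.'' Under sum aggregation the second reading is false. If $s_x(u)\le\bar{s}_x$ and $s_y(u)\le\bar{s}_y$, then adding the two inequalities gives $s_{GT}(u)\le\bar{s}_{GT}$, so $u$ cannot be a ground-truth hub. I would therefore state explicitly that the lemma concerns a node that is a non-hub in one layer and a hub in the other, and include this one-line remark. The remark also motivates the composition heuristic, since nodes that are non-hubs in all layers can be safely pruned under sum aggregation, in contrast to the max case in Lemma~\ref{lemma:max-lemma-c}.
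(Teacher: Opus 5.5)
Your proof is the same as the paper's. You use $\bar{s}_{GT}=\bar{s}_x+\bar{s}_y$ to reduce the ground-truth hub condition, for a node with $s_y(u)\le\bar{s}_y$, to $s_x(u)-\bar{s}_x > \bar{s}_y - s_y(u)$. Your three-node instance is correct: $s_{GT}(A)=21>46/3=\bar{s}_{GT}$ while $s_y(A)=1\le 2=\bar{s}_y$. It supplies the explicit witness that the paper only asserts exists. Your remark that a node at or below average in both layers can never become a ground-truth hub under sum aggregation is also correct and worth keeping. The paper's motivation of the Top-$k$ strategy appeals to this lemma for exactly that stronger reading, and your remark shows that reading is false.
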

\begin{proof}
Let the two layers be $Layer_x$ and $Layer_y$ with common node set $V$, and let $s_{GT}(u) = s_x(u) + s_y(u)$ denote  the strength of node $u$ in the Boolean-OR  ground truth with sum aggregation. Let $\bar{s}_{GT}$, $\bar{s}_{x}$, $\bar{s}_{y}$ denote the corresponding average strengths. Since averages are linear under the sum aggregation, 

\begin{equation*}
\bar{s}_{GT} =  \frac{1}{|V|} \sum_{u \in V}(s_x(u)+s_y(u)) = \bar{s}_x + \bar{s}_y
\end{equation*}

Assume a node $u$ is not a hub in one layer, for example $Layer_y$, so that $s_y(u) \le \bar{s}_y$. Node $u$ becomes a hub in the ground truth whenever

\begin{equation*}
    s_{GT}(u) = s_x(u)+s_y(u) > \bar{s}_{GT} = \bar{s}_x + \bar{s}_y
\end{equation*}

which is equivalent to 
\begin{equation*}
    s_x(u) - \bar{s}_x > \bar{s}_y - s_y(u)
\end{equation*}
So, even if the node $u$ does not exceed the hub threshold in $Layer_y$, it can still be a ground truth hub, given that its strength in $Layer_x$ is large enough to compensate for its deficit in $Layer_y$. Therefore, a node does not need to be a hub in every layer to be identified as a hub during composition. 
\end{proof}

\begin{lemma}
If a node exceeds the layer-average strength in each layer, then it remains a hub in the Boolean-OR with sum aggregation ground truth. 
\label{lemma:sum-lemma-c}
\end{lemma}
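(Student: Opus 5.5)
The argument is direct and rests on two facts established in this section. Under Boolean-OR with sum aggregation the ground-truth strength is additive, $s_{GT}(u) = s_x(u) + s_y(u)$. The global average is linear in the layer averages,
\begin{equation*}
\bar{s}_{GT} = \frac{1}{|V|}\sum_{u \in V}\bigl(s_x(u)+s_y(u)\bigr) = \bar{s}_x + \bar{s}_y ,
\end{equation*}
which was already derived in the proof of Lemma~\ref{lemma:sum-lemma-b}. I would first restate both identities, stressing that they use the shared node set $V$, so every average is taken over the same denominator $|V|$. This is the only place where homogeneity of the HoMLN matters.

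Next I take a node $u$ with $s_x(u) > \bar{s}_x$ and $s_y(u) > \bar{s}_y$. Adding the two strict inequalities gives
\begin{equation*}
s_{GT}(u) = s_x(u) + s_y(u) > \bar{s}_x + \bar{s}_y = \bar{s}_{GT},
\end{equation*}
so $u$ meets the ground-truth hub criterion $s_{GT}(u) > \bar{s}_{GT}$.

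Equivalently, in the form used in Lemma~\ref{lemma:sum-lemma-b}, the hub condition reads $s_x(u) - \bar{s}_x > \bar{s}_y - s_y(u)$. Here the left side is positive and the right side is negative, so the condition holds automatically. This shows the present lemma is the extreme case of the compensation argument in Lemmas~\ref{lemma:sum-lemma-a} and~\ref{lemma:sum-lemma-b}.

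I would close by noting that the argument extends verbatim to $n$ layers by summing $n$ strict inequalities, since $\bar{s}_{GT} = \sum_i \bar{s}_i$. The result also justifies including every node that is a hub in all layers in the composed hub set without further checks. The main thing to be careful about is justifying $\bar{s}_{GT} = \bar{s}_x + \bar{s}_y$. This requires that the ground-truth weights on overlapping edges be exactly the sum of the layer weights. It also requires that the averages be taken over the common vertex set, not over each layer's non-isolated nodes. If the averaging sets differed, linearity of the average would fail and the inequality could break, so I would state that assumption explicitly.
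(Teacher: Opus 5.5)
Your proof is correct and takes essentially the same approach as the paper: both use additivity $s_{GT}(u) = s_x(u)+s_y(u)$ and linearity $\bar{s}_{GT} = \bar{s}_x + \bar{s}_y$ over the common node set $V$, then add the two strict inequalities. The paper words the last step through the composition quantities $s(u)$ and $\bar{s}$, which coincide with $s_{GT}(u)$ and $\bar{s}_{GT}$; your remarks on the $n$-layer extension and on the need for a common averaging denominator are accurate and are left implicit in the paper.
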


\begin{proof}
    Let $Layer_x$ and $Layer_y$ be two layers with common node set $V$. 
    For $u \in V$, let $s_x(u)$ and $s_y(u)$ denote its strength in layers $Layer_x$ and $Layer_y$, respectively, and define 
    \begin{equation*}
        \bar{s}_x = \frac{1}{|V|}\sum_{u\in V}s_x(u), \qquad 
        \bar{s}_y = \frac{1}{|V|}\sum_{u\in V}s_y(u)
    \end{equation*}
    If $s_x(u) > \bar{s}_x$ and $s_y(u) > \bar{s}_y$, then 
    \begin{equation*}
        s_{GT}(u) = s_x(u) + s_y(u) > \bar{s}_x + \bar{s}_y
    \end{equation*}
    
    In the composition function, the combined strength is $s(u) = s_x(u) + s_y(u)$ and the global average is
    \begin{equation*}
       \bar{s} = \frac{1}{|V|}\sum_{u\in V}s(u) = \frac{1}{|V|}\sum_{u \in V}(s_x(u) + s_y(u)) = \bar{s}_x + \bar{s}_y 
    \end{equation*}
    Hence,
    \begin{equation*}
        s(u) = s_x(u) + s_y(u) > \bar{s}_x + \bar{s}_y = \bar{s}
    \end{equation*}
    which implies
    \begin{equation*}
        s(u) > \bar{s},
    \end{equation*}
    Therefore, $u$ remains a hub in the composition. 
\end{proof}

\subsection{Composition Strategies}
We evaluate three strategies that utilizes different sizes and selection of node from analysis function to be used in the composition function. 

\underline{(1) Hubs-Only.} The candidate set contains only nodes identified as hubs in $Layer_x$ or $Layer_y$. During composition, $\hat{s}$ is evaluated only for this restricted set and nodes exceeding the global threshold $\bar{s}$ are retained. The threshold $\bar{s}$ is computed using the full layer-wise strength sums, i.e., \begin{equation*}
    \bar{s} = \dfrac{(\text{sum}_x + \text{sum}_y)}{|V|}
\end{equation*} independent of the candidate set size. By evaluating only the candidate hub set instead of the entire node set, this strategy reduces the number of nodes evaluated during composition while applying the same strength computation and global hub threshold used in the ground-truth formulation.

The Hubs-Only composition strategy is formalized in Algorithm \ref{alg:sum-hubs_only}, where only the union of the layer-wise hub sets is used as the candidate set for composition. 
\begin{algorithm}[tbp]
\caption{Hubs-Only Composition Strategy}\label{alg:sum-hubs_only}
\begin{algorithmic}[1]
\REQUIRE Hub sets $H_x$, $H_y$; strengths $s_x(u)$, $s_y(u)$; sums $\text{sum}_x$, $\text{sum}_y$; total node count $\lvert V\rvert $
\STATE $H \leftarrow \emptyset$
\STATE $\bar{s} \leftarrow \dfrac{\text{sum}_x +\text{sum}_y}{\lvert V\rvert }$
\STATE $C \leftarrow H_x \cup H_y$
\FOR{each $u \in C$}
\STATE $\hat{s}(u) \leftarrow s_x(u) + s_y(u)$
\IF{$\hat{s}(u) > \bar{s}$}
\STATE $H \leftarrow H \cup \{u\}$
\ENDIF
\ENDFOR
\RETURN $H$
\end{algorithmic}
\end{algorithm}

\underline{(2) All Nodes (Exact Node Strength).} The composition step evaluates every node in $V$ using $\hat{s} = s_x(u) + s_y(u)$. Since this exactly matches $s_{GT}(u)$ under sum aggregation, this strategy guarantees perfect alignment with the ground truth set. The global threshold computed using 
\begin{equation*}
    \bar{s} = \frac{(\text{sum}_x + \text{sum}_y)}{|V|}
\end{equation*}
Unlike explicit aggregation, this strategy does not require constructing edge set, it relies only on layer-wise strengths and their sums. However, evaluating all nodes during composition incurs high computational cost. 

The complete procedure for decoupled hub composition using the All Nodes strategy for sum aggregation is formalized in Algorithm \ref{alg:sum-all_nodes}.

\begin{algorithm}[tbp]
\caption{All Nodes Composition Strategy}\label{alg:sum-all_nodes}
\begin{algorithmic}[1]
\REQUIRE Layer-wise strengths $s_x(u)$ and $s_y(u)$ for all $u \in V$ and sums $\text{sum}_x$, $\text{sum}_y$
\STATE $H \leftarrow \emptyset$
\STATE $\bar{s} \leftarrow \dfrac{\text{sum}_x +\text{sum}_y}{|V|}$
\FOR{each $u \in V$}
\STATE $\hat{s}(u) \leftarrow s_x(u) +s_y(u)$
\IF{$\hat{s}(u) > \bar{s}$}
\STATE $H \leftarrow H \cup \{u\}$
\ENDIF
\ENDFOR
\RETURN $H$
\end{algorithmic}
\end{algorithm}

\underline{(3) Top-$k$ Nodes.} In this strategy, the candidate set consists of the top-$k$ highest-strength nodes selected independently from each layer, regardless of whether they are identified as local-layer hubs. Composition is performed only on this subset, while the global threshold $\bar{s}$ is still computed using the full layer-wise strength sums. This strategy captures nodes that may not exceed the hub threshold in any individual layer but become hubs after strengths are aggregated as described in Lemma \ref{lemma:sum-lemma-b}. The parameter $k$ provides a tunable trade-off between accuracy and efficiency: larger $k$ values improve coverage of potential hub identification while avoiding a full scan of all nodes. 

\subsection{Experimentation Results}
The proposed strategies were evaluated using runtime, precision, and accuracy. Experiments are conducted on the synthetic datasets with different configurations, like edge overlap percentages, as well as the real-world datasets. Accuracy is measured using Jaccard similarity against the ground truth hub set, while runtime captures the cost of analysis, composition, and ground truth aggregation. 

\begin{figure}[!ht]
    \centering
    \includegraphics[width=\columnwidth]{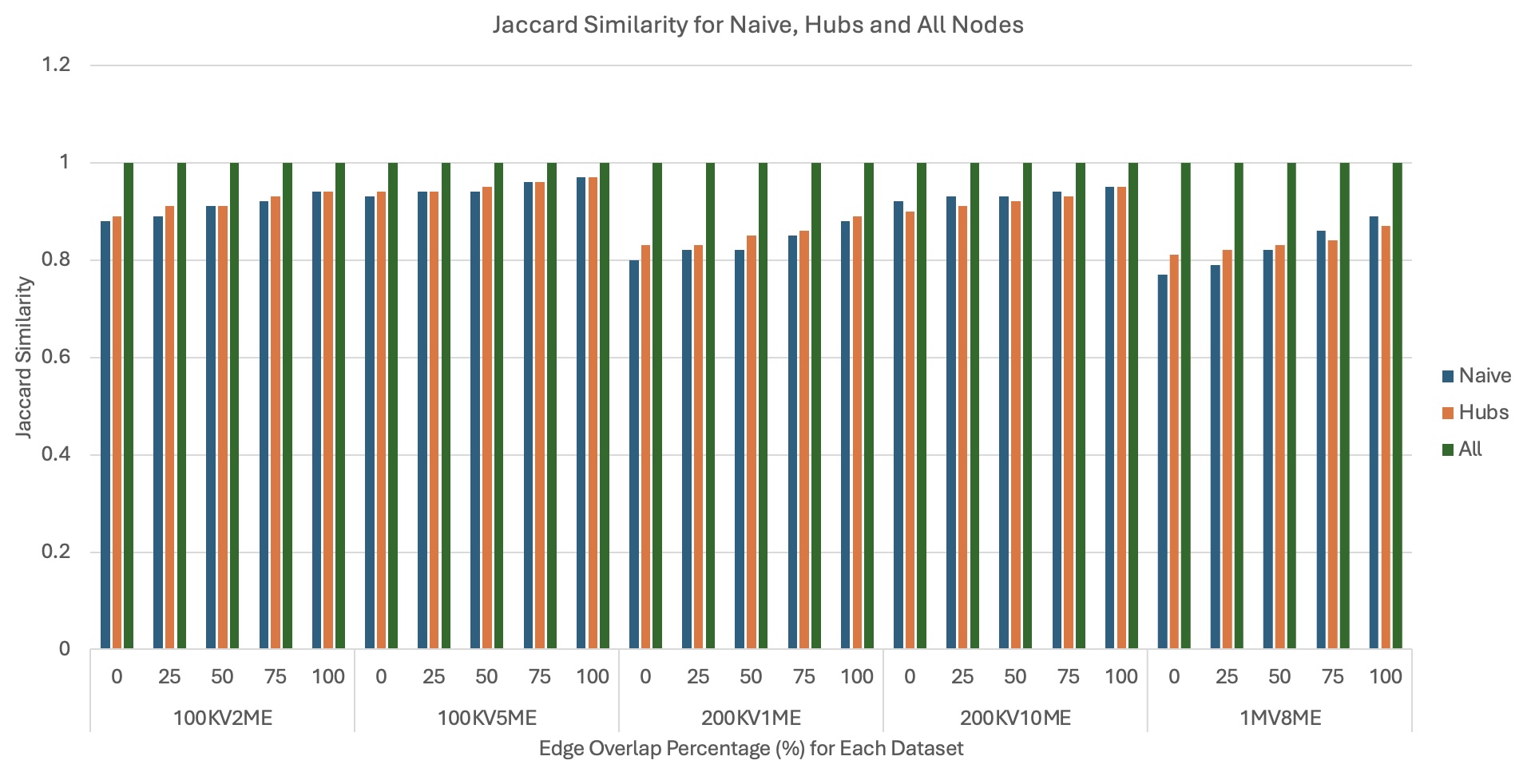}
    \caption{Accuracy Comparison Using Jaccard Similarity for Naive, Hub-Only, All-Nodes Strategies}
    \label{fig:sum-jaccard-naïve-hubs-all}
\end{figure}

\begin{figure}[!ht]
    \includegraphics[width=\columnwidth]{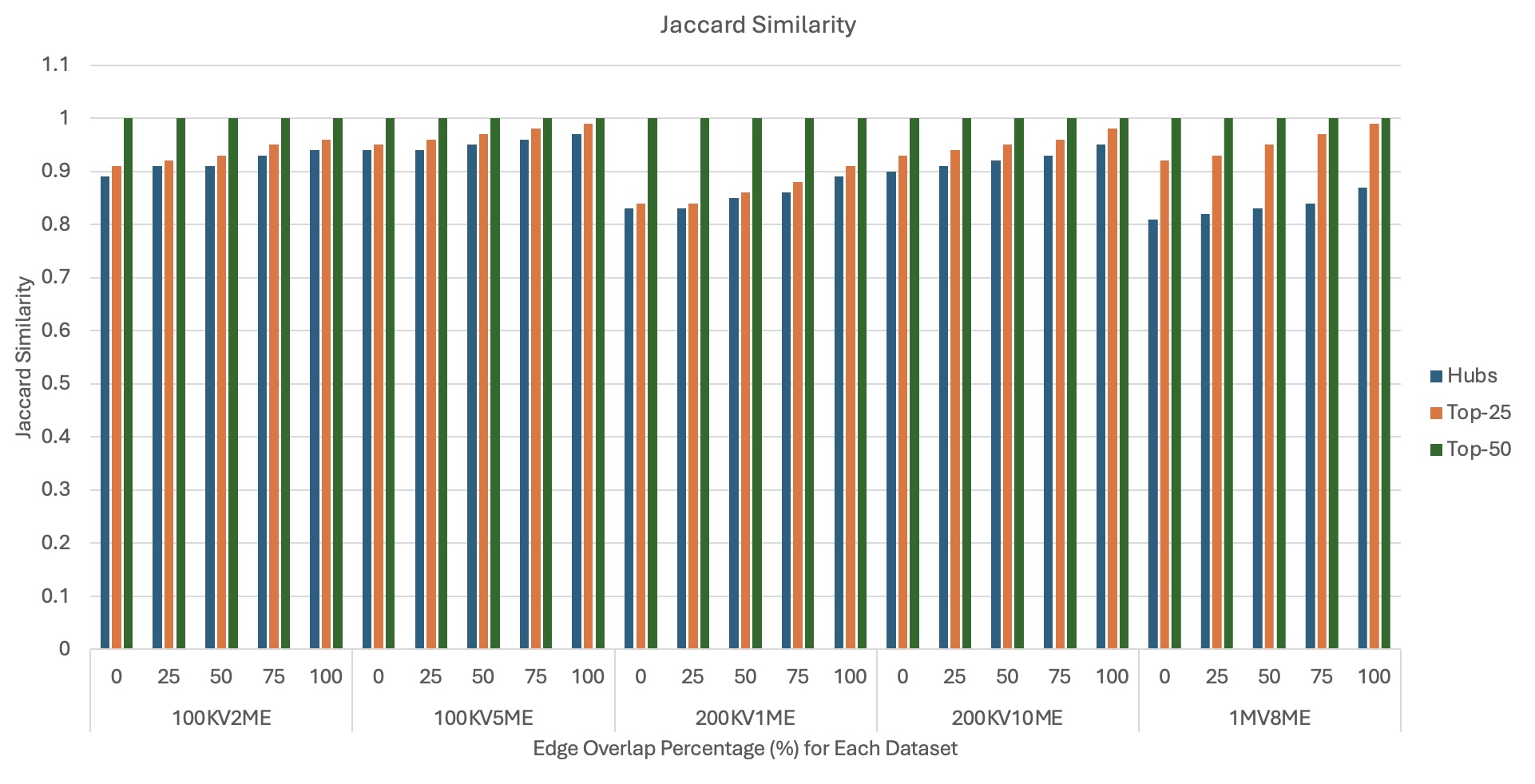}
    \caption{Accuracy Comparison Using Jaccard Similarity for Top-$k$ strategies}
    \label{fig:sum-jaccard-topk}
\end{figure}

\underline{Accuracy and Hub Identification:} Accuracy results are shown in Figures \ref{fig:sum-jaccard-naïve-hubs-all} and \ref{fig:sum-jaccard-topk}. Keeping and using all the nodes achieves perfect Jaccard similarity score with the ground truth hub set across all datasets and overlap configurations. This is expected, since under the sum aggregation $\hat{s}(u) = s_{GT}(u)$ and the global threshold is computed exactly. 

The Naïve and Hubs only strategies demonstrates a reduced accuracy, especially at low overlap levels. This behavior is consistent with Lemmas \ref{lemma:sum-lemma-a} and \ref{lemma:sum-lemma-b}, which show that hub membership under the sum aggregation cannot be reliably deduced from local layer hub identification alone. 

The Top-$k$ strategy substantially improves the accuracy of the hub identification. Selecting the top 25\% of nodes from each layer significantly increases Jaccard similarity compared to the Hubs-only strategy. Notably, selecting the top 50\% of nodes consistently achieves a prefect Jaccard similarity across all datasets and overlaps. This observation indicates that true ground truth hubs are reliably contained within the upper half of the local layer strength rankings.

Edge overlap between layers also plays an important role in hub identification accuracy. As overlap increases, a larger fraction of edges contribute weights to the same node pair across layers, which reinforces the total node strength in the ground truth graph. This reduces the discrepancies between layer rankings, making it easier to identify hubs during composition. Across all strategies, accuracy improves as overlap increases, and the gap between using all the nodes and the other strategies narrows.

\begin{figure}[!ht]
    \includegraphics[width=\columnwidth]{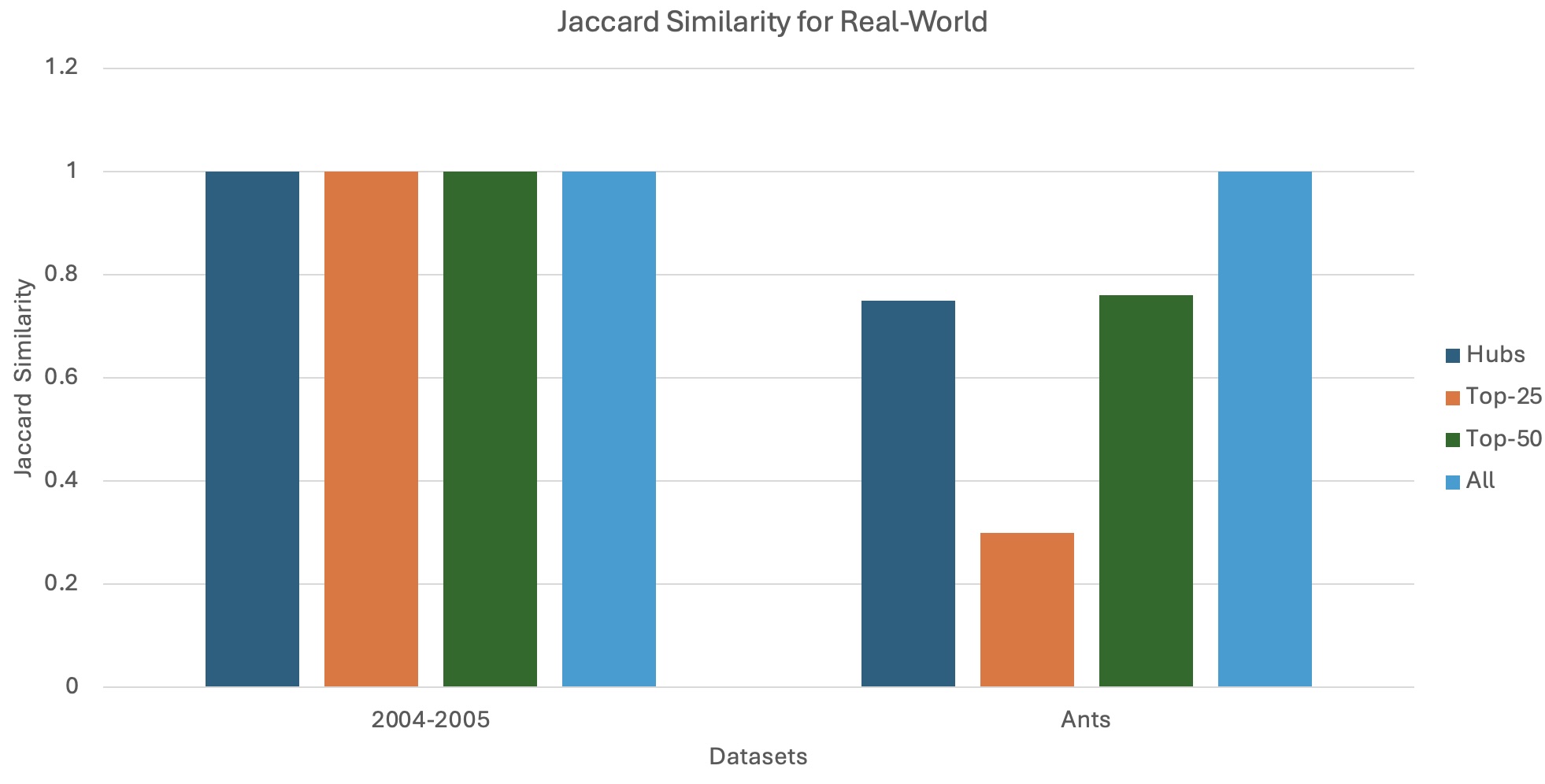}
    \caption{Accuracy Comparison Using Jaccard Similarity on Real-World Datasets}
    \label{fig:sum_jaccard_rw}
\end{figure}

\underline{Real-World Accuracy Evaluation:} To validate the proposed composition strategies on the real-world datasets, we evaluate their performance on the two real-world datasets, the 2004-2005 Co-Authorship and the Ants networks, as shown in figure \ref{fig:sum_jaccard_rw}.

For the co-authorship dataset, all strategies achieve perfect Jaccard similarity, indicating that the ground truth hubs are consistently recovered regardless of the candidate selection strategy. This behavior reflects the strong concentration of interaction weights and stable hub structure across layers in the co-authorship network. 

In contrast, the Ants dataset exhibits greater sensitivity to candidate selection. While the All-Nodes strategy achieves perfect recovery, the Hubs-only and Top-$k$ strategies display varying accuracy, with Top-$50$ substantially outperforming Top-25. This highlights the impact of uneven distribution across nodes and weaker hub dominance in the Ants network. Nevertheless, selecting a sufficiently large top-ranked subset reliably recovers the majority of ground truth hubs, reinforcing the effectiveness of partial ranking strategies in real-world scenarios. 

\begin{figure}[ht]
    \includegraphics[width=\columnwidth]{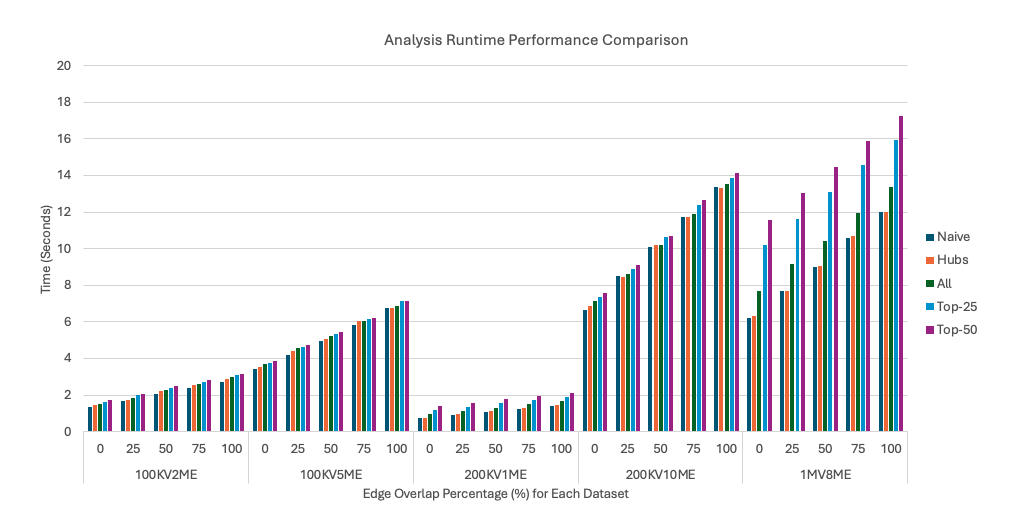}
    \caption{Layer-Wise Analysis Runtime for Weighted for Synthetic Data sets}
    \label{fig:sum-analysis-time}
\end{figure}

\begin{figure}[ht]
    \includegraphics[width=\columnwidth]{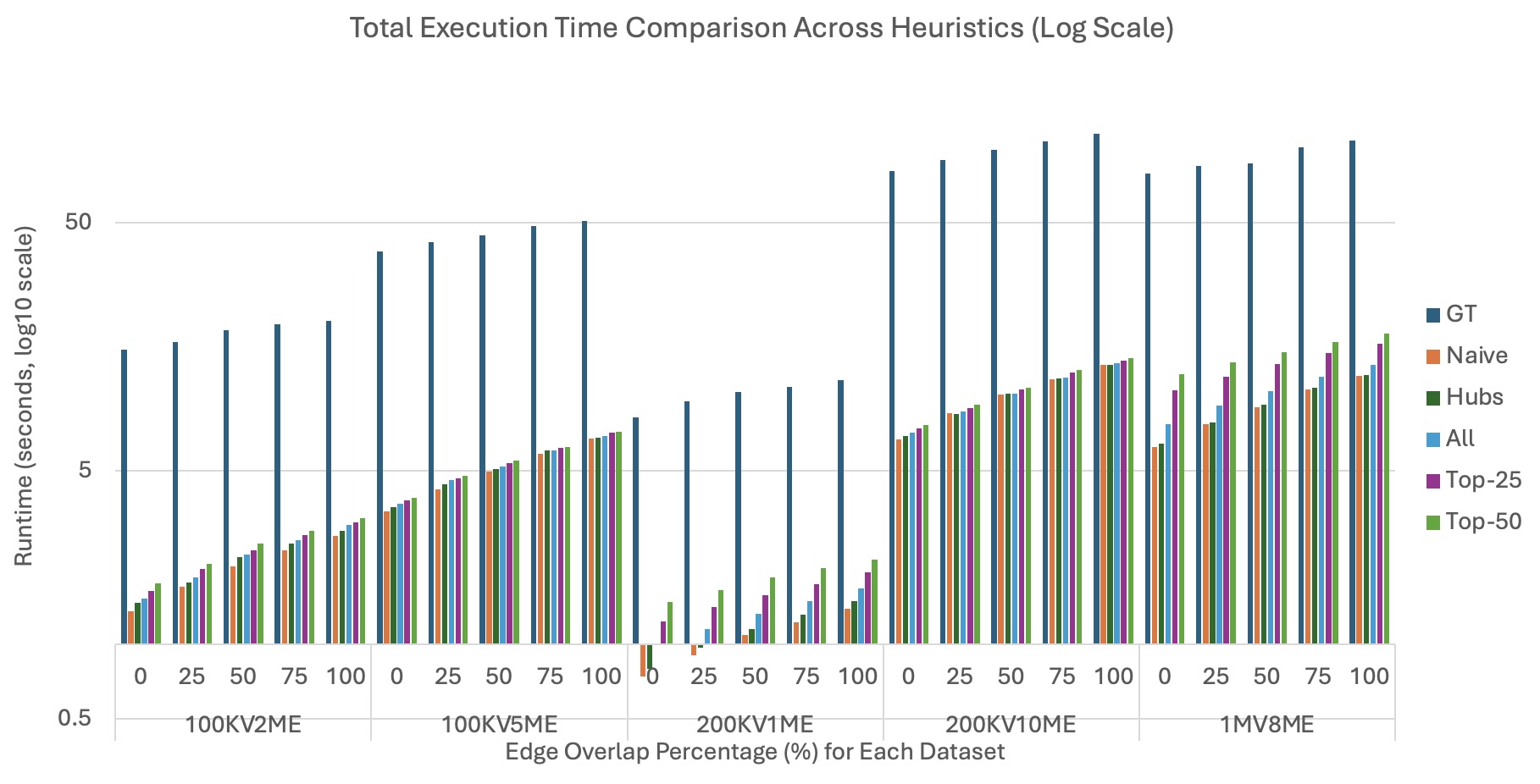}
    \caption{Comparison of Execution Time of the Strategies against Execution Time of Ground Truth
for Synthetic Data Sets}
    \label{fig:sum-total-time}
\end{figure}

\underline{Runtime Performance:} Runtime results for the proposed composition strategies are summarized in Figures \ref{fig:sum-analysis-time} and \ref{fig:sum-total-time}, with detailed composition times for the 50-50 distribution split in Table \ref{table:sum_composition_result_50-50}. Since the core layer-wise strength computation is common to all strategies, observed difference arise from additional processing during the analysis and composition functions.

The All-Nodes strategy consistently exhibits the lowest composition time across all datasets and overlap configurations. This behavior is expected, as the strategy evaluates every node using a singe linear scan over array-based strength representations, avoiding candidate filtering or sorting. Despite processing all nodes, this implementation minimizes overhead and yields highly efficient composition. In contrast, the Naïve and Hubs-Only strategies incur higher composition times despite evaluating fewer nodes. The Naïve relies on set union and Hubs-only on dictionary lookups to construct and process the candidate sets during composition, which introduces additional overhead.

The Top-$k$ strategy exhibits the highest runtime among the heuristic approaches due to additional processing during the analysis function. In particular, identifying the top ranked nodes in each layer requires maintaining a heap-based priority queue over node strengths. This ranking overhead increases the analysis time and grows with both dataset size and the value of $k$. As a result, Top-$k$ strategies incur higher analysis costs than Hubs-Only or Naïve strategies, even before composition. 

Across all strategies, analysis time increases as edge overlap grows. Higher overlap produces larger accumulated node strengths, increasing the cost of strength computation and ranking operations. This effect is most pronounced for the Top-$k$ strategies, where sorting or heap maintenance must process larger or more closely ranked strength values as overlap increases.

Dataset size has a clear impact on runtime, with larger graphs incurring higher costs across all strategies. To further validate runtime behavior in practical settings, we evaluate the proposed strategies on real-world datasets. Table \ref{table:sum_rw_runtime} summarizes the total runtime for each strategy on the co-authorship and Ants dataset. Across both datasets, all heuristic strategies achieve substantial runtime reductions compared to explicit ground truth aggregation. The Ants dataset incurs extremely low run-times across all strategies due to its smaller size, while the co-authorship network exhibits higher but still manageable costs. Notably, Top-$k$ strategies incur additional overhead relative to Hubs-Only due to ranking operations, yet remain significantly faster than the explicit ground truth computation.





\begin{table}[!ht]
    \caption{Total runtime (in seconds) for real-world datasets}
    \label{table:sum_rw_runtime}
    \resizebox{\columnwidth}{!}{%
        \begin{tabular}{| c | c | c | c | c | c | c |}
			\hline
\textbf{Dataset} & \textbf{GT} & \textbf{Hubs} & \textbf{All} & \textbf{Top-25} & \textbf{Top-50}  \\ \hline
2004-2005        & 1.547 & 0.208   & 0.438  & 0.591     & 0.734      \\ \hline
Ants             & 0.057 & 0.036  & 0.0486  & 0.036     & 0.036   \\ \hline  
\end{tabular}
}

\end{table}



Overall, these results highlight a clear tradeoff between runtime and accuracy. While the All-Node strategy provides exact composition with minimal overhead, the Top-$k$ strategy incurs additional analysis cost due to ranking but achieves near-perfect or perfect hub recovery while evaluating only a fraction of the nodes. This demonstrates that accurate hub identification can be achieved using only partial layer-wise ranking information, with the Top-$k$ strategy recovering the ground truth hubs while evaluating only a subset of nodes.

\begin{table}[!ht]
    \caption{Composition runtime (in Seconds) for Naïve, All, Hubs, and Top-$k$ strategies on the 50-50 distribution datasets}
	\label{table:sum_composition_result_50-50}
    \resizebox{\columnwidth}{!}{%
        \begin{tabular}{| c | c | c | c | c | c | c |}
			\hline
                        \textbf{Dataset} & \textbf{Overlap}\% & \textbf{Naïve} & \textbf{All} & \textbf{Hubs}    & \textbf{Top-25}  & \textbf{Top-50}  \\ \hline
\textbf{100KV2ME}  & 0          & 0.00346        & 0.00041      & 0.00886 & 0.01411 & 0.03044 \\ \hline
                   & 25         & 0.00344        & 0.00041      & 0.00947 & 0.01404 & 0.03049 \\ \hline
                   & 50         & 0.00342        & 0.00037      & 0.00971 & 0.01428 & 0.02986 \\ \hline
                   & 75         & 0.00339        & 0.00036      & 0.00933 & 0.01445 & 0.02831 \\ \hline
                   & 100        & 0.00339        & 0.00041      & 0.00937 & 0.01421 & 0.02952 \\ \hline
\textbf{100KV5ME}  & 0          & 0.00338        & 0.00045      & 0.00921 & 0.01431 & 0.02844 \\ \hline
                   & 25         & 0.00339        & 0.00041      & 0.00926 & 0.01384 & 0.03071 \\ \hline
                   & 50         & 0.00342        & 0.00042      & 0.00934 & 0.01368 & 0.03106 \\ \hline
                   & 75         & 0.00341        & 0.00044      & 0.00912 & 0.01391 & 0.02944 \\ \hline
                   & 100        & 0.00341        & 0.00045      & 0.00927 & 0.01394 & 0.03043 \\ \hline
\textbf{200KV1ME}  & 0          & 0.00560        & 0.00091      & 0.02098 & 0.03363 & 0.06014 \\ \hline
                   & 25         & 0.00560        & 0.00093      & 0.02053 & 0.03199 & 0.06225 \\ \hline
                   & 50         & 0.00566        & 0.00090      & 0.02059 & 0.03314 & 0.06194 \\ \hline
                   & 75         & 0.00534        & 0.00091      & 0.01905 & 0.03181 & 0.06220 \\ \hline
                   & 100        & 0.00555        & 0.00090      & 0.01958 & 0.03173 & 0.06323 \\ \hline
\textbf{200KV10ME} & 0          & 0.00547        & 0.00092      & 0.01948 & 0.03341 & 0.06748 \\ \hline
                   & 25         & 0.00548        & 0.00092      & 0.01826 & 0.03386 & 0.06590 \\ \hline
                   & 50         & 0.00547        & 0.00091      & 0.01954 & 0.03335 & 0.06595 \\ \hline
                   & 75         & 0.00548        & 0.00089      & 0.01927 & 0.03103 & 0.06459 \\ \hline
                   & 100        & 0.00545        & 0.00088      & 0.01913 & 0.03185 & 0.06723 \\ \hline
\textbf{1MV8ME}    & 0          & 0.01754        & 0.00864      & 0.14960 & 0.33157 & 0.62571 \\ \hline
                   & 25         & 0.01733        & 0.00999      & 0.14426 & 0.33294 & 0.61673 \\ \hline
                   & 50         & 0.01711        & 0.01030      & 0.14361 & 0.33882 & 0.61575 \\ \hline
                   & 75         & 0.01694        & 0.00718      & 0.13530 & 0.33459 & 0.62284 \\ \hline
                   & 100        & 0.01659        & 0.00652      & 0.13075 & 0.33471 & 0.62147 \\ 
			\hline
		\end{tabular}
		}
\end{table}

\section{Conclusion}
\label{section:conclusion}

In this paper, we studied weighted degree centrality in homogeneous multilayer networks (HoMLNs) under Boolean-OR aggregation of edge structure and two commonly used weight aggregation operators: \textit{max} and \textit{sum}. Using a decoupling-based framework, we separated computation into a layer-wise analysis function and a heuristic-based composition function, enabling hub identification without explicitly constructing the aggregated multilayer graph. 

We formally characterized the behavior of weighted degree centrality under both aggregation operators through definitions, bounds, and lemmas that highlight the fundamental differences between additive and non-additive aggregation. For sum aggregation, we showed that the ground truth node strength can be computed exactly from layer-wise strengths, without explicitly constructing the aggregated graph. As a result, hub identification during the composition function is exact when all nodes are evaluated. 

For max aggregation, we showed that exact ground truth strength cannot, in general, be computed from layer-wise strength alone due to its dependence on edge-level overlap across layers. To address this limitation, we proposed a conservative lower bound heuristic and an optimistic upper bound heuristic that provide bounds on the true node strength. Our analysis explain why dominance in at least one layer plays a critical role in hub identification. 

Extensive experimental evaluation on synthetic datasets with varying edge distributions and edge overlaps configurations, as well as, real-world networks, validates the theoretical analysis. The results show that the proposed heuristics achieve high hub identification accuracy while providing significant improvements in efficiency compared to ground truth aggregation. In particular, ranking based strategies consistently recovers the majority of ground truth hubs while evaluating only a subset of nodes.  

Overall, this work demonstrates that the decoupling-based framework, when combined with heuristics, offers a practical and scalable approach for centrality analysis in weighted HoMLNs. The results confirm that accurate hub identification can be achieved using partial layer-wise information, significantly reducing computational overhead. 

Future work includes extending the proposed heuristics to networks with more than two layers, where aggregation effects compound across layers and trade-offs between accuracy and runtime are more pronounced. Additional directions include exploring other aggregation methods like \textit{min} and \textit{average}, as well as extending the framework to other centrality measures such as weighted closeness or betweenness centrality.

\bibliographystyle{IEEEtran}
\bibliography{bibliography/itlabTheses.bib, bibliography/itlabCollectionPart1, bibliography/itlabCollectionPart2, bibliography/itlabPublications, bibliography/newbibs}

\end{document}